\documentclass{scrartcl}
\KOMAoptions{paper=letter}

\usepackage{changepage}
\usepackage[T1]{fontenc}

\usepackage{url}
\usepackage[hidelinks]{hyperref}
\usepackage[utf8]{inputenc}
\usepackage[small]{caption}
\usepackage{graphicx}
\usepackage{amsmath}
\usepackage{amsthm}
\usepackage{booktabs}
\usepackage{algorithm}
\usepackage{algorithmic}
\urlstyle{same}
\usepackage[round]{natbib}\bibliographystyle{abbrvnat}

\setlength\tabcolsep{2 pt}
\usepackage{changepage}
\usepackage{thm-restate}

\usepackage{tikz}
\usetikzlibrary{patterns}
\usepackage{cleveref}
\usepackage{array}
\usepackage{graphicx}
\usepackage{xcolor}
\usepackage{enumitem}
\usepackage{tabularx}
\usepackage{xspace}
\usepackage{amssymb}
\usepackage{amsmath}
\usepackage{amsthm}
\usepackage{bigfoot}

\setlength{\pdfpagewidth}{\paperwidth}
\setlength{\pdfpageheight}{\paperheight}

\newcommand{\sd}{\ensuremath{\mathit{SD}}\xspace}
\newcommand{\rd}{\ensuremath{\mathit{RD}}\xspace}
\newcommand{\cond}{\ensuremath{\mathit{COND}}\xspace}

\theoremstyle{definition}
\newtheorem{definition}{Definition}
\newtheorem{example}{Example}
\theoremstyle{plain}

\newtheoremstyle{bfnote}
{}{}%
{}{}%
{\bfseries}{.}%
{ }%
{\thmname{#1}\thmnumber{ #2}\thmnote{\textnormal{ (#3)}}}
\theoremstyle{bfnote}
\newtheorem{remark}{Remark}

\hyphenation{strat-e-gy-proof-ness}

\title{Strategyproof Randomized Social Choice for Restricted Sets of Utility Functions\thanks{A preliminary version of this paper appeared in Proceedings of the 30th International Joint Conference on Artificial Intelligence \citep{Lede21c}.}}

\author{Patrick Lederer\thanks{Email address: p.lederer@unsw.edu.au}\\
UNSW Sydney}

\newcolumntype{L}[1]{>{\raggedright\let\newline\\\arraybackslash\hspace{0pt}}m{#1}}
\newcolumntype{C}[1]{>{\centering\let\newline\\\arraybackslash\hspace{0pt}}m{#1}}
\newcolumntype{R}[1]{>{\raggedleft\let\newline\\\arraybackslash\hspace{0pt}}m{#1}}

\newenvironment{profile}{\medmuskip=0mu\relax
	\thickmuskip=1mu\relax \centering 
	\tabular}{\endtabular\medskip}

\newcommand{\profilewidth}{\columnwidth}

\newcommand{\ep}{\emph{ex post}\xspace}

\newcommand{\omni}{\ensuremath{\mathit{OMNI^*}}\xspace}

\interfootnotelinepenalty=10000
\widowpenalty10000
\clubpenalty10000

\begin{document}

\maketitle

\begin{abstract}
	Social decision schemes (SDSs) map the voters' preferences over multiple alternatives to a probability distribution over these alternatives. In a seminal result, \citet{Gibb77a} has characterized the set of SDSs that are strategyproof with respect to all utility functions and his result implies that all such SDSs are either unfair to the voters or alternatives, or they require a significant amount of randomization. To circumvent this negative result, we propose the notion of $U$-strategyproofness which postulates that only voters with a utility function in a predefined set $U$ cannot manipulate. We then analyze the tradeoff between $U$-strategyproofness and various decisiveness notions that restrict the amount of randomization of SDSs. In particular, we show that if the utility functions in the set $U$ value the best alternative much more than other alternatives, there are $U$-strategyproof SDSs that choose an alternative with probability $1$ whenever all but $k$ voters rank it first. On the negative side, we demonstrate that $U$-strategyproofness is incompatible with Condorcet-consistency if the set $U$ satisfies minimal symmetry conditions. Finally, we show that no \ep efficient and $U$-strategyproof SDS can be significantly more decisive than the uniform random dictatorship if the voters are close to indifferent between their two favorite alternatives.
\end{abstract}

\textbf{Keywords:} Randomized Social Choice, Social Decision Schemes, Strategyproofness, Decisiveness\medskip

\textbf{JEL Classification Code:} D71

	\section{Introduction}

    One of the central challenges in collective decision-making is strategic manipulation: voters may lie about their true preferences to influence the outcome in their favor. Such strategic manipulations are undesirable for a number of reasons. Firstly, when voters lie about their true preferences, it becomes difficult to identify socially desirable alternatives, so the overall quality of the decision-making process may deteriorate. Secondly, the desirable properties of voting rules are in doubt when voters act strategically because these properties are typically shown under the assumption that voters reveal their preferences truthfully. Finally, manipulable voting rules incentivize voters to learn the preferences of other voters and to identify optimal manipulations. Since this requires resources that are not evenly distributed in the population, voting becomes unfair.
    
    For these reasons, it is desirable to use strategyproof voting rules, which are immune to strategic manipulations by the voters. However, in a seminal result, \citet{Gibb73a} and \citet{Satt75a} have independently shown that every reasonable deterministic voting rule fails strategyproofness. Specifically, these authors have proven that every strategyproof voting rule is either dictatorial (i.e., it always chooses the favorite alternative of a fixed voter) or imposing (i.e., some alternatives can never be chosen). Clearly, this means that no strategyproof voting rule is acceptable for practical purposes.
    
    A natural escape route from this impossibility theorem is to allow for voting rules that may use chance to determine the winner. Such randomized voting rules, typically called \emph{social decision schemes (SDSs)}, assign to every alternative a probability based on the preferences of the voters and the final winner of the election is determined by chance according to these probabilities. For instance, the uniform random dictatorship is a well-known SDS which picks a voter uniformly at random and implements his favorite alternative as the winner of the election. Moreover, SDSs are typically considered strategyproof if no voter can increase his expected utility by misreporting his preferences for every utility function that is consistent with his true preference relation. In other words, this strategyproofness notion, which we call \sd-strategyproofness, ensures that voters cannot manipulate the outcome in their favor regardless of their exact utility~function. 
    
    Unfortunately, \citet{Gibb77a} and \citet{Barb79a} have shown that \sd-strat\-egy\-proof\-ness does not allow for particularly desirable SDSs. In more detail, these authors have characterized the set of \sd-strategyproof SDSs and it follows from their results that all such SDSs are either unfair to voters or alternatives or indecisive as they require a significant amount of randomization. 
    For instance, Gibbard's work implies that the uniform random dictatorship is the only \sd-strategyproof SDS that is anonymous (i.e., it treats all voters equally) and unanimous (i.e., an alternative is guaranteed to be selected if all voters report it as their favorite alternative). Put differently, this result characterizes the uniform random dictatorship as the most decisive SDS that is fair and \sd-strategyproof. However, even this SDS requires randomization as soon as two voters disagree on their favorite alternatives and is thus indecisive.\footnote{We refer here also to the work of \citet[][]{BBS20a}, who have experimentally verified that the uniform random dictatorship tends to use a significant amount of randomization.} Moreover, a result by \citet{Beno02a} implies that {every} \sd-strategyproof SDS has a chance to select an alternative that is bottom-ranked by all but one voters. This lack of decisiveness disqualifies \sd-strategyproof SDSs from many applications, where such a high degree of randomization is unacceptable.

    \paragraph{Contribution.} In an attempt to circumvent this impossibility of fair, strategyproof, and decisive SDSs, we will introduce and analyze a weakening of \sd-strategyproofness called $U$-strategyproofness. The idea of this axiom is to require strategyproofness only for a predefined set of utility functions $U$ instead of all possible utility functions. More formally, an SDS is said to be $U$-strategyproof for a set of utility functions $U$ if no voter with utility function $u\in U$ can increase his expected utility by lying about his true preferences. This means that $U$-strategyproofness for the set of all utility functions is equivalent to $\sd$-strategyproofness and $U$-strategyproofness becomes weaker when the set $U$ gets smaller. Thus, this strategyproofness notion allows for a much more fine-grained analysis than \sd-strategyproofness: instead of simply dismissing an SDS as \sd-manipulable, $U$-strategyproofness enables us to investigate for which utility functions it is strategyproof. Similarly, $U$-strategyproofness allows us to pinpoint the source of impossibility theorems by analyzing the necessary utility functions for such results. Furthermore, we believe that $U$-strategyproofness is also relevant in practice as it seems plausible for many situations that not all utility functions need to be considered. For instance, voters are unlikely to assign similar utilities to very different alternatives, so we may omit such utility functions when reasoning about their strategic behavior.

    Based on $U$-strategyproofness, we investigate when strategyproofness is compatible with three decisiveness axioms that restrict the amount of randomization that SDSs can use. In more detail, our first decisiveness condition is a new axiom called $k$-unanimity, which postulates that an alternative should be chosen with probability $1$ if it is top-ranked by all but $k$ voters. Or, in other words, this axiom forbids randomization if there is an alternative that is top-ranked by a wast majority of the voters. Moreover, we will also study the compatibility of $U$-strategyproofness with the more established conditions of Condorcet-consistency (which requires that an alternative that beats all other alternatives in pairwise majority comparisons should be chosen with probability~$1$) and \emph{ex post} efficiency (which requires that Pareto-dominated alternatives should be chosen with probability $0$). 
    Based on these axioms, we show the following results. 

    \begin{enumerate}[label=(\arabic*), leftmargin=*]
        \item We first suggest two variants of the uniform random dictatorship, called $\mathit{RD}^k$ and \omni, which satisfy $k$-unanimity and $U$-strategyproofness for large sets of utility functions $U$. Roughly, $\mathit{RD}^k$ chooses an alternative with probability $1$ if it is top-ranked by at least $n-k$ voters and agrees with the uniform random dictatorship if no such alternative exists. On the other hand, \omni chooses an alternative with probability $1$ if it is top-ranked by a majority of the voters and otherwise returns the uniform lottery over the top-ranked alternatives. We show that these SDSs are strategyproof for voters who favor their favorite alternative much more than their second best alternative. To make this more formal, let $u(x)$ denote the utility a voter assigns to his $x$-th best alternative. Then, $\mathit{RD}^k$ is $U$-strategyproof for the set $U$ containing all utility functions $u$ with $u(1)-u(2)\geq k(u(2)-u(m))$ and \omni is $U$-strategyproof for the set $U$ containing all utility functions $u$ with $u(1)-u(2)\geq \sum_{j=3}^m u(2)-u(j)$ (\Cref{thma}). Moreover, for the class of rank-based SDSs, we show that our SDSs are close to optimally solving the tradeoff between $k$-unanimity and $U$-strategyproofness (\Cref{thmb}).
        \item We next analyze the compatibility of Condorcet-consistency and $U$-strategy\-proof\-ness. Unfortunately, it turns out that, if there are $m\geq 4$ alternatives, there is no Condorcet-consistent SDS that is $U$-strategyproof for any non-empty set $U$ satisfying minimal symmetry constraints (\Cref{thmc}). In particular, it suffices if $U$ contains one utility function for each preference relation, thus demonstrating a far-reaching impossibility. On the other hand, when $m=3$, we show that the Condorcet rule, which picks the Condorcet winner with probability $1$ whenever such an alternative exists and otherwise randomizes uniformly over all alternatives, is $U$-strategyproof for the set $U$ containing all utility functions $u$ with $u(1)-u(2)=u(2)-u(3)$. Moreover, we prove that, except for profiles with majority ties, this rule is characterized by Condorcet-consistency and $U$-strategyproofness for the given set $U$ and $m=3$ alternatives (\Cref{thmd}).
        \item Finally, we show that no \ep efficient and $U$-strategyproof SDS can be significantly more decisive than the uniform random dictatorship when the set $U$ contains utility functions that are close to indifferent between their two favorite alternatives. More precisely, we show that if $U$ contains utility functions $u$ such that $u(1)-u(2)\leq \frac{\epsilon}{2}(u(2)-u(3))$, then no $U$-strategyproof and \ep efficient SDS can guarantee more than $\frac{\ell}{n}+\epsilon$ probability to an alternative that is top-ranked by $\ell$ voters (\Cref{thm:expost1}). We note that this result can be seen as a generalization of the work by \citet{Beno02a}. Moreover, in combination with \Cref{thma,thmb}, \Cref{thm:expost1} has a simple interpretation: the design of strategyproof and decisive SDSs is only possible if voters themselves are decisive about their favorite alternatives. 
    \end{enumerate}

	\paragraph{Related Work.} Inspired by the Gibbard-Satterthwaite theorem \citep{Gibb73a,Satt75a} and the follow-up works by \citet{Gibb77a} and \citet{Barb79a} for randomized voting rules, a large body of literature has investigated escape routes from and variants of these impossibility results. We thus refer the reader to \citet{Tayl05a} and \citet{Barb10a} for a general overview of strategyproofness in social choice theory. 

    More specifically, our work belongs to a growing body of literature that investigates social decision schemes with respect to strategyproofness notions other than \sd-strat\-egy\-proofness \citep[e.g.,][]{Hoan17a,ABBB15a,BBEG16a,BLS22c,BrLe24a}. 
    Similar to our work, these papers typically observe that \sd-strategyproofness is incompatible with various other properties and they aim to circumvent these impossibility results by considering weaker strategyproofness notions. The main difference between our work and these existing results is that $U$-strategyproofness is much more versatile than previously considered strategyproofness notions: whereas all existing strategyproofness notions are based on a particular method of comparing lotteries, $U$-strategyproofness allows us to flexibly choose the considered set of utility functions and thus enables a more fine-grained analysis. Moreover, we note that weakenings of \sd-strategyproofness have also been considered in random assignment \citep[e.g.,][]{BoMo02a,Balb16a,Cho18a,ChYu20a}, and we believe that $U$-strategyproofness could also be of interest in this domain. 

    While $U$-strategyproofness has, to the best of our knowledge, not been studied before, similar concepts have been suggested. For instance, \citet{Sen11a} defined strategyproofness based on utility functions that put an emphasis on the best three alternatives, and \citet{MeSe21a} focus on utility functions where the utility exponentially decreases with the position of an alternative in the preference relation. Moreover, in set-valued social choice (where the outcome of an election is a non-empty set of alternatives instead of a lottery) preferences over sets of alternatives are often derived from utility functions. For instance, \citet{DuSc00a} and \citet{Beno02a} employ this approach to motivate their strategyproofness notions. 
    
    Furthermore, $U$-strategyproofness is also related to the study of cardinal SDSs, where voters report cardinal utilities instead of ordinal preference relations \citep[e.g.,][]{Hyll80a,DPS07a,Nand12a,EMMS20a}. In more detail, when restricting the domain of feasible utility functions to $U$, $U$-strategy\-proof SDSs induce strategyproof cardinal SDSs by replacing each voter's utility function with the induced ordinal preference relation. Thus, the study of $U$-strategyproofness can also be seen as the study of strategyproof cardinal SDSs for restricted domains of utility functions. In this sense, our paper is also related to a significant line of work that analyzes strategyproof SDSs on restricted domains of preferences \citep[e.g.,][]{EPS02a,BMS05a,ELP17a,ChZe18a,ChZe21a}. However, these authors restrict the set of feasible preference relations, whereas $U$-strategyproofness restricts the domain of feasible utility functions. 

    Finally, the tradeoff between strategyproofness and decisiveness has been considered before \citep[e.g.,][]{Barb77b,Beno02a,BBL21b,BLR23a}. These papers show that, under varying assumptions, strategyproofness and decisiveness are incompatible. By contrast, we precisely quantify this tradeoff by analyzing the set of utility functions for which decisive and strategyproof SDSs exist. Moreover, we note that the paper by \citet{BLR23a} is dual to ours as these authors investigate how well \sd-strategyproof SDSs can approximate Condorcet-consistency and \emph{ex post} efficiency. That is, \citeauthor{BLR23a} weaken decisiveness axioms while requiring full \sd-strategyproofness, and we weaken \sd-strategyproofness while requiring full decisiveness.

	\section{Preliminaries}\label{sec:preliminaries}
	
	Let $N=\{1,\dots, n\}$ denote a set of $n$ voters and $A=\{x_1,
    \dots,x_m\}$ a set of $m$ alternatives. We assume that every voter $i\in N$ reports a \emph{preference relation}~$\succ_i$, which is a complete, transitive, and anti-symmetric binary relation on $A$. We compactly represent preference relations as comma-separated lists and use the $*$ symbol as placeholder for omitted alternatives. For instance, $x_1,*,x_2$ means that $x_1$ is the most preferred alternative, $x_2$ the least preferred one, and that the order of the remaining alternatives is not specified. The set of all preference relations is given by $\mathcal{R}$. We summarize the preference relations of all voters by a \emph{preference profile}, which is formally a function that assigns each voter to a preference relation. The set of all preference profiles is therefore $\mathcal{R}^N$. We typically write preference profiles by indicating the sets of voters that report a specific preference relation directly before the preference relation.

	The study object of this paper are {social decision schemes}, which are voting rules that may use chance to determine the winner of an election. To make this more formal, we define \emph{lotteries} as probability distributions over the alternatives, i.e., a lottery $p$ is a function from $A$ to $[0,1]$ such that $\sum_{x\in A} p(x) =1$. Moreover, we denote by $\Delta(A)$ the set of all lotteries over $A$. Formally, a \emph{social decision scheme (SDS)} $f$ is a function that maps every preference profile $R\in\mathcal{R^N}$ to a lottery $p\in \Delta(A)$, i.e., the signature of an SDS is $f: \mathcal{R}^N\rightarrow \Delta(A)$. We denote by $f(R,x)$ the probability assigned to $x$ by $f$ in the profile $R$, and we interpret this term as the probability that $x$ will be chosen as the final winner in the profile $R$. Furthermore, we extend this notation to sets of alternatives $X\subseteq A$ by defining $f(R,X)=\sum_{x\in X} f(R,x)$.

\subsection{Fairness Axioms}

In this paper, we will focus on designing SDSs that treat all voters and alternatives fairly. In particular, all SDSs that we suggest will satisfy two basic fairness conditions called anonymity and neutrality, which formalize that voting rules should not discriminate against particular voters or alternatives. Moreover, we will also consider the class of rank-based SDSs. We note that most of our negative results do not require any of these fairness axioms.

    \paragraph{Anonymity.} Anonymity is a mild fairness (or symmetry) condition that postulates that all voters are treated equally. Formally, an SDS $f$ is \emph{anonymous} if $f(R)=f(\pi(R))$ for all preference profiles $R$ and permutations $\pi:N\rightarrow N$. Here, $R'=\pi(R)$ denotes the profile given by ${\succ'_{\pi(i)}}={\succ_{i}}$ for all $i\in N$.

    \paragraph{Neutrality.} Similar to anonymity, neutrality requires that all alternatives are treated fairly. We again formalize this concept with the help of permutations: an SDS $f$ is \emph{neutral} if $f(\tau(R),\tau(x))=f(R,x)$ for all preference profiles $R$, alternatives $x$, and permutations $\tau:A\rightarrow A$. This time, we denote by $R'=\tau(R)$ the profile given by $\tau(x)\succ_i \tau(y)$ if and only if $x\succ_i y$ for all $x,y\in A$ and $i\in N$. 

    \paragraph{Rank-basedness.} Rank-basedness is a strengthening of anonymity that requires that the outcome of an SDS should only depend only the positions of the alternatives in the voters' preference but not on their exact order. To make this more formal, we define the \emph{rank} of an alternative~$x$ in a preference relation $\succ_i$ by $r(\succ_i,x)=|\{y \in A\colon y\succ_i x\}|$. That is, a voter's favorite alternative has rank $1$, his second favorite alternative has rank $2$, and so on. The \emph{rank vector} $r^*(R,x)$ of an alternative $x$ contains the rank of~$x$ with respect to every voter in increasing order, i.e., $r^*(R,x)_i\leq r^*(R,x)_{i+1}$ for all $i\in \{1,\dots, n-1\}$. Moreover, the \emph{rank matrix} $r^*(R)$ contains the rank vectors of all alternatives as rows. Finally, we call an SDS $f$ \emph{rank-based} if it only depends on the rank matrix, i.e., $f(R)=f(R')$ for all preference profiles $R$ and $R'$ with $r^*(R)=r^*(R')$. The class of rank-based SDSs has been first considered by \citet{Lasl96a} and contains many prominent rules such as positional scoring rules and anonymous tops-only SDSs.

\subsection{Decisiveness Axioms}

    We will now introduce our decisiveness axioms, namely $k$-unanimity, Condorcet-consis\-tency, and \emph{ex post} efficiency. The first two of these conditions formalize decisiveness by requiring that randomization should only be used if there is no clear winner in the preference profile. We believe such conditions to be crucial to ensure the acceptability of randomization in voting and that the outcome chosen by the SDSs faithfully follows the voters' preferences. Indeed, whenever an SDS randomizes over multiple alternatives, the choice of the final winner depends on chance instead of the voters' preferences, which seems particularly undesirable if there is a clear winner in a preference profile. On the other hand, \ep efficiency formalizes decisiveness in a dual sense by forbidding randomization over Pareto-dominated~alternatives. 

    \paragraph{$k$-Unanimity.} A common but very weak decisiveness condition is unanimity, which requires that an alternative is guaranteed to be chosen if it is unanimously top-ranked by all voters. More formally, an SDS $f$ is \emph{unanimous} if $f(R,x)=1$ for all preference profiles $R$ and alternatives $x$ such that all voters top-rank $x$ in $R$. While this condition is uncontroversial, it seems too weak for practical applications as elections are rarely unanimous. We thus generalize this notion by requiring that an alternative is chosen if all but $k$ voters report it as their favorite alternative. Specifically, an SDS $f$ is \emph{$k$-unanimous} if $f(R,x)=1$ whenever $n-k$ or more voters report $x$ as their favorite alternative. We note that unanimity is equivalent to $0$-unanimity and that $k$-unanimity is only satisfiable if $k<\frac{n}{2}$. Moreover, $k$-unanimity generalizes existing axioms: for instance, \citeauthor{Beno02a}'s (\citeyear{Beno02a}) near unanimity is equivalent to $1$-unanimity and the absolute winner property of \citet{BLS22c} corresponds to $\lfloor\frac{n-1}{2}\rfloor$-unanimity. Also, \citet{Amor09a} studies a concept called unequivocal majority, which is closely related to $k$-unanimity. 

    \paragraph{Condorcet-consistency.} Condorcet-consistency is a prominent decisiveness axiom that requires that the Condorcet winner is chosen with probability $1$ whenever it exists. To formalize this condition, let $n_{xy}(R)=|\{i\in N\colon x \succ_i y\}|-|\{i\in N\colon y \succ_i x\}|$ denote the \emph{majority margin} between two alternatives $x,y\in A$ in the preference profile~$R$. An alternative $x$ is the \emph{Condorcet winner} in a preference profile~$R$ if $n_{xy}(R)>0$ for all other alternatives $y\in A\setminus \{x\}$, i.e., if it beats every other alternative in a pairwise majority comparison. Moreover, an SDS $f$ is \emph{Condorcet-consistent} if $f(R,x)=1$ for all profiles $R$ and alternatives $x\in A$ such that $x$ is the Condorcet winner in $R$. We note that Condorcet-consistency implies $k$-unanimity for every $k<\frac{n}{2}$ as an alternative that is top-ranked by more than half of the voters is the Condorcet winner.

    \paragraph{Ex post efficiency.} The idea of \ep efficiency is that an alternative should have no chance of winning the election if there is a unanimously more preferred alternative. In more detail, we say an alternative $x$ \emph{Pareto-dominates} another alternative $y$ in a preference profile $R$ if $x\succ_i y$ for all voters $i\in N$. Conversely, an alternative $x$ is \emph{Pareto-optimal} in a profile $R$ if it is not Pareto-dominated by any other alternative. Finally, an SDS $f$ is \emph{ex post} efficient if it only randomizes over Pareto-optimal alternatives, i.e., $f(R,x)=0$ for all preference profiles $R$ and Pareto-dominated alternatives $x$. We observe that \ep efficiency implies unanimity.

\subsection{$U$-Strategyproofness} \label{subsec:USP}

    The central axiom in our analysis is strategyproofness, which postulates that voters should not be able to benefit by lying about their true preferences. Following the standard approach in the literature \citep[e.g.,][]{Gibb77a,Sen11a,BBEG16a}, we will formalize this axiom by assuming that every voter $i\in N$ is endowed with a \emph{(von Neumann-Morgenstern) utility function $u_i:A\rightarrow \mathbb{R}$} and compares lotteries via their expected utility. Put differently, a voter $i$ with utility function $u_i$ prefers lottery $p$ to lottery $q$ if $u_i(p)=\sum_{x\in A} p(x) u_i(x) \geq \sum_{x\in A} q(x) u_i(x)=u_i(q)$. We say that a utility function~$u_i$ is \emph{consistent} with a preference relation ${\succ_i}$ if it ordinally agrees with $\succ_i$, i.e., $u_i(x)> u_i(y)$ if and only if $x\succ_i y$ for all distinct $x,y\in A$. Moreover, we define by $\mathcal{U}^{\succ}=\{u\in \mathbb{R}^A\colon \text{$u$ is consistent with $\succ$}\}$ the set of all utility functions that are consistent with the preference relation $\succ$ and by $\mathcal{U}=\bigcup_{{\succ}\in\mathcal{R}} \mathcal{U}^{\succ}$ the set of all (injective) utility functions. Finally, given an integer $k\in\{1,\dots, m\}$, we will often write $u_i(k)$ for the utility of the alternative with the $k$-th highest utility. This allows us to conveniently define constraints on utility functions that are independent of particular preference relations. For instance, $U=\{u\in\mathcal{U}\colon u(1)\geq 2 u(2)\}$ is the set of utility functions that assign at least twice as much utility to the best alternative than to the second-best one.
	
	Even though we assume the existence of utility functions, voters only report ordinal preference relations. As a consequence, strategyproofness is commonly defined by quantifying over all utility functions that are consistent with a voters' preference relation, which results in the standard notion of \sd-strategyproofness.\footnote{\sd stands for stochastic dominance as \sd-strategyproofness can equivalently be defined via stochastic dominance \citep[e.g.,][]{Sen11a,BBEG16a}.\nopagebreak}

    \begin{definition}[\sd-strategyproofness]
        An SDS $f$ is \emph{\sd-strategyproof} if $u_i(f(R))\geq u_i(f(R'))$ for all voters $i\in N$, preference profiles $R$, $R'$, and utility functions ${u_i\in \mathcal U}$ such that $u_i$ is consistent with $\succ_i$ and ${\succ_j}={\succ_j'}$ for all $j\in N\setminus \{i\}$.
    \end{definition}

    As usual, we say an SDS is \emph{\sd-manipulable} if it is not \sd-strategyproof. We note that \sd-strategyproofness is predominant in the literature as it guarantees that voters cannot manipulate regardless of their exact utility functions \citep[e.g.,][]{Gibb77a,Barb79a,EPS02a,BLR23a}. However, as discussed in the introduction, this strategyproofness notion leads to rather negative results and it seems for many practical applications not necessary to prohibit manipulations with respect to all utility functions. Moreover, \sd-strategyproofness is somewhat crude as it does not allow us to pinpoint the types of voters for which an SDS is strategyproof or manipulable.
    Motivated by these observations, we introduce a new strategyproofness notion called $U$-strategyproofness, which weakens $\sd$-strategyproofness by only quantifying over a predefined subset of utility functions $U$ instead of the set of all utility functions $\mathcal{U}$. 

    \begin{definition}[$U$-strategyproofness]
        An SDS $f$ is \emph{${U}$-strategyproof} if $u_i(f(R))\geq u_i(f(R'))$ for all voters $i\in N$, preference profiles $R$, $R'$, and {utility functions ${u_i\in U}$} such that $u_i$ is consistent with $\succ_i$ and ${\succ_j}={\succ_j'}$ for all $j\in N\setminus \{i\}$.
    \end{definition}

    Analogous to \sd-strategyproofness, we say an SDS is \emph{$U$-manipulable} if it fails $U$-strategyproofness, which means that there is a profile $R$, a voter $i$, and a utility function $u_i\in U$ such that voter $i$ can improve his expected utility with respect to $u_i$ by lying about his true preferences in $R$. Moreover, we emphasize that $U$-strategyproofness only guarantees that voters with a utility function in $U$ cannot manipulate. Thus, $U$-strategyproofness becomes weaker when we consider a smaller set of utility functions $U$, to the point where it is trivial when $U=\emptyset$. On the other extreme, $\mathcal{U}$-strategyproofness, i.e., $U$-strategyproofness with respect to the set of all utility functions $\mathcal{U}$, is equivalent to $\sd$-strategyproofness. Hence, $U$-strategyproofness allows us to define a spectrum of strategyproofness notions that weaken \sd-strategyproofness.
    
    We will next discuss an example to illustrate the difference between $U$-strategyproofness and \sd-strategyproofness. 
	
	\begin{example}\label{ex}
		Consider the profiles $R^1$ and $R^2$ shown below and let $f$ denote an SDS such that $f(R^1,x)=\frac{1}{3}$ for $x\in \{a,b,c\}$ and $f(R^2,b)=1$. Moreover, consider the utility functions $u_1$, $u_2$, and $u_3$ with $u_1(a)=2$, $u_1(b)=1$, $u_1(c)=0$, $u_2(a)=3$, $u_2(b)=1$, $u_2(c)=0$, $u_3(a)=3$, $u_3(b)=2$, and $u_3(c)=0$. These utility functions are consistent with voter $1$'s preference relation in $R^1$ and we check whether this voter can benefit by deviating to $R^2$. A quick calculation shows that $u_1(f(R^1))=1=u_1(f(R^2))$, $u_2(f(R^1))=\frac{4}{3}> 1=u_2(f(R^2))$, and $u_3(f(R^1))=\frac{5}{3}<2=u_3(f(R^2))$. Hence, voter $1$ can increase his expected utility if his utility function is $u_3$ and $f$ is \sd-manipulable. By contrast, voter $1$ does not benefit from deviating to $R^2$ if his utility function is $u_1$ or $u_2$. Since the preferences of the other voters are not consistent with $u_1$, $u_2$, and $u_3$, it follows that $f$ is $\{u_1, u_2\}$-strategyproof on these two profiles.\smallskip
		
		\begin{profile}{C{0.07\profilewidth} C{0.3\profilewidth} C{0.3\profilewidth} C{0.3\profilewidth}}
			$R^1$: & $1$: $a, b, c$ & $2$: $b,c,a$ & $3$: $c,a,b$ \\
			$R^2$: & $1$: $b, a, c$ & $2$: $b,c,a$ & $3$: $c,a,b$
		\end{profile}
	\end{example}
	
	In our results, we will typically examine $U$-strategyproofness for \emph{symmetric} sets $U$, i.e., we assume that $u \in U$ implies that $u \circ \pi \in U$ for every permutation $\pi$ on $A$. This formalizes the natural condition that all preference relations should be treated equally by strategyproofness. Furthermore, this symmetry condition is rather weak since every neutral SDS is $U$-strategyproof for a symmetric set $U$ (See Claim (1) in \Cref{prop:properties}). Moreover, we will often restrict our attention to the case where $U$ is given by a single utility function $u$ and its renamings, i.e., $U=\{u\circ\pi\colon \pi\in \Pi\}$, where $\Pi$ denotes the set of all permutations on $A$. In this case, we write $u^\Pi$-strategyproofness instead of $U$-strategyproofness. We emphasize that $u^\Pi$-strategyproofness associates every preference relation with exactly one utility function, i.e., it is the weakest non-trivial form of $U$-strategyproofness for a symmetric set $U$. 
    
    We conclude this section by discussing several helpful properties of $U$-strategyproofness. We defer the proofs of the following statements to the appendix.

    \begin{restatable}{proposition}{properties}\label{prop:properties}
        Consider a non-empty set of utility functions $U$ and suppose that $f$ and~$g$ are two $U$-strategyproof SDSs. The following claims are true:
        \begin{enumerate}[label=(\arabic*), leftmargin=*]
            \item If $f$ is neutral, it is $U'$-strategyproof for a symmetric set $U'$ with $U\subseteq U'$.
            \item $f$ is $U'$-strategyproof for the set $U'=\bigcup_{\succ\in\mathcal{R}} \text{conv}(U\cap \mathcal{U}^{\succ})$, where $\text{conv}(X)$ denotes the convex hull of a given set $X$. 
            \item The SDS $h$ given by $h(R)=\lambda f(R)+(1-\lambda) g(R)$ is $U$-strategyproof for all $\lambda\in [0,1]$.
            \item It holds that $u(f(R))\geq u(f(R'))$ for all preference profiles $R$ and $R'$, groups of voters $S\subseteq N$, and utility functions $u\in U$ such that ${\succ_i}={\succ_j}$ for all $i,j\in S$, $u$ is consistent with $\succ_i$ for all $i\in S$, and ${\succ_j}={\succ_j'}$ for all $j\in N\setminus S$.  
        \end{enumerate}
    \end{restatable}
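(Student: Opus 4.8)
The plan is to treat the four claims one by one, exploiting throughout the single observation that, for a fixed voter $i$ and two profiles $R,R'$ that agree on all voters but $i$, the map $u\mapsto u(f(R))-u(f(R'))$ is linear in $u$. Claims (2) and (3) follow almost immediately from this linearity, claim (1) is a change-of-variables argument via neutrality, and claim (4) is a telescoping argument along a chain of profiles.

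For claim (1), I would set $U'=\{u\circ\pi\colon u\in U,\ \pi\in\Pi\}$; this set is symmetric and contains $U$ (take $\pi=\mathrm{id}$) by construction, so it only remains to show that $f$ is $U'$-strategyproof. Given $u'=u\circ\pi\in U'$, a voter $i$, and profiles $R,R'$ differing only in $\succ_i$ with $u'$ consistent with $\succ_i$, I would apply the permutation $\pi$ of alternatives (in the sense of the $\tau(R)$-notation) to both profiles. The bookkeeping steps are: (i) $u$ is consistent with the preference that $i$ holds in $\pi(R)$; (ii) by neutrality, $u(f(\pi(R)))=u'(f(R))$ and likewise $u(f(\pi(R')))=u'(f(R'))$; and (iii) $\pi(R)$ and $\pi(R')$ still differ only in voter $i$. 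Then $U$-strategyproofness of $f$ applied at $\pi(R),\pi(R')$ with utility $u$ gives $u(f(\pi(R)))\geq u(f(\pi(R')))$, i.e.\ $u'(f(R))\geq u'(f(R'))$. The only thing requiring care is aligning the direction of $\pi$ with the convention $f(\tau(R),\tau(x))=f(R,x)$.

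For claim (2), the first step is to observe that each $\mathcal{U}^{\succ}$ is convex, since a convex combination of utility functions all ordinally agreeing with $\succ$ again ordinally agrees with $\succ$; hence $\text{conv}(U\cap\mathcal{U}^{\succ})\subseteq\mathcal{U}^{\succ}$, so every element of $U'$ is a legitimate utility function consistent with some preference relation and the definition of $U'$-strategyproofness is applicable. Then, for $u'\in\text{conv}(U\cap\mathcal{U}^{\succ_i})$ written as $u'=\sum_j\lambda_j u_j$ with $u_j\in U\cap\mathcal{U}^{\succ_i}$, $\lambda_j\geq 0$, $\sum_j\lambda_j=1$, the identity $u'(f(R))-u'(f(R'))=\sum_j\lambda_j\bigl(u_j(f(R))-u_j(f(R'))\bigr)$ is a nonnegative combination of terms that are each nonnegative by $U$-strategyproofness of $f$. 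Claim (3) is the same linearity without convex hulls: $u(h(R))=\lambda u(f(R))+(1-\lambda)u(g(R))\geq\lambda u(f(R'))+(1-\lambda)u(g(R'))=u(h(R'))$ for every $u\in U$ consistent with $\succ_i$, where $R,R'$ differ only in $\succ_i$.

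For claim (4), let $\succ$ denote the common preference relation that the voters in $S$ hold in $R$, and order $S$ as $i_1,\dots,i_{|S|}$. I would build a chain of profiles $R'=R^0,R^1,\dots,R^{|S|}=R$ in which $R^{t}$ arises from $R^{t-1}$ by changing voter $i_t$ from her preference in $R'$ to $\succ$. Consecutive profiles differ in the single voter $i_t$, who holds the preference $\succ$ (consistent with $u$) in $R^t$, so $U$-strategyproofness of $f$ yields $u(f(R^t))\geq u(f(R^{t-1}))$; chaining these inequalities gives $u(f(R))\geq u(f(R'))$. Overall, none of the four parts is genuinely hard; the main obstacle is bookkeeping — getting the permutation conventions right in claim (1) and verifying the convexity of $\mathcal{U}^{\succ}$ in claim (2).
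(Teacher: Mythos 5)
Your proposal is correct and follows essentially the same route as the paper's proof: the same symmetric set $U'=\{u\circ\pi\colon u\in U,\pi\in\Pi\}$ with a neutrality/change-of-variables argument for Claim (1), linearity of expected utility for Claims (2) and (3), and the same one-voter-at-a-time chaining for Claim (4). The only differences are cosmetic — you argue Claim (1) directly rather than by contradiction and handle general convex combinations in Claim (2) at once instead of pairwise.
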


    Less formally, Claim (1) states that, for neutral SDSs, it is without loss of generality to focus on symmetric sets of utility functions. Of course, if $f$ is not symmetric, this is not the case as a rule may only be manipulable for specific preference relations. Claim (2) shows that for every SDS $f$ and preference relation $\succ$, $f$ is $U$-strategyproof for a convex subset $U$ of $\mathcal{U}^{\succ}$. In particular, this implies that for every SDS there is a unique inclusion-maximal set $U$ for which it is $U$-strategyproof. Claim (3) shows that the set of SDSs that is $U$-strategyproof for a given set of utility functions $U$ is convex and mirrors an analogous insight for \sd-strategyproofness. Finally, Claim (4) can be seen as a mild version of group-strategyproofness: if an SDS is $U$-strategyproofness, it cannot be manipulated by groups of voters who have the same true preferences, even if the voters in a group cooperate. This insight will be useful in our proofs as it allows us to simultaneously change the preference relations of multiple voters.

	\section{Results}\label{sec:Results}
	
	We are now ready to present our results. Specifically, we will analyze the compatibility of $U$-strategyproofness with $k$-unanimity, Condorcet-consistency, and \emph{ex post} efficiency in \Cref{subsec:kunanimity,subsec:Condorcet,subsec:expost}, respectively.
    
	\subsection{$k$-Unanimity}\label{subsec:kunanimity}
    
    As our first question, we will investigate the tradeoff between $U$-strategyproofness and $k$-unanimity. To this end, we first recall that the only \sd-strategyproof SDS that satisfies anonymity and unanimity is the uniform random dictatorship (henceforth called $\rd$). This SDS chooses a voter uniformly at random and implements his favorite alternative as the winner of the election. Hence, the probability that an alternative $x$ is the winner in a profile $R$ is $\mathit{RD}(R,x)=\frac{|\{i\in N\colon r(\succ_i,x)=1\}|}{n}$. However, $\rd$ fails $k$-unanimity for every $k>0$ and, more generally, \citet{Beno02a} has shown that no \sd-strategyproof SDS is $k$-unanimous for $k>0$.
	
	To circumvent this impossibility, we will next define a variant of $\rd$ that satisfies $k$-unanimity for an arbitrary $k\in \{0, \dots, \lfloor \frac{n-1}{2}\rfloor\}$ and $U$-strategyproofness for a large set of utility functions $U$. Consider for this the following family of SDSs, which we call $k$-random dictatorships (abbreviated by $\rd^k$): if at least $n-k$ voters report $x$ as their favorite alternative, $\rd^k$ assigns probability $1$ to $x$; otherwise, it returns the outcome of \rd. As we show in \Cref{thma}, $\rd^k$ satisfies $U$-strategyproofness for the set $U=\{u\in\mathcal U\colon u(1)-u(2)\geq k (u(2) - u(m))\}$, i.e., if voters have a strong preference for the first alternative, $\rd^k$ is strategyproof. Unfortunately, the definition of $U$ depends on $k$, i.e., for large values of $k$, there must be a very large gap between $u(1)$ and $u(2)$. Another variant of $\rd$, which we refer to as $\omni$, solves this problem. This SDS assigns probability $1$ to an alternative $x$ if more than half of the voters report $x$ as their favorite alternative, and otherwise randomizes uniformly over all alternatives that are top-ranked by at least one voter. This SDS is $U$-strategyproof for $U=\{u\in\mathcal U\colon u(1)-u(2)\geq \sum_{i=3}^m u(2)-u(i)\}$. While $\omni$ satisfies $\lfloor \frac{n-1}{2}\rfloor$-unanimity for all numbers of voters and alternatives, the condition on $U$ is demanding unless there are only few alternatives. 
	
	\begin{restatable}{theorem}{thma}\label{thma} The following claims are true:
    \begin{enumerate}[label=(\arabic*), leftmargin=*]
        \item For all $k\in \{1,\dots,\lfloor\frac{n-1}{2}\rfloor\}$, $\rd^k$ satisfies $U$-strategyproofness for the set $U=\{u\in\mathcal U\colon u(1)-u(2)\geq k (u(2) - u(m))\}$ and violates $\{u\}$-strategyproofness for every utility function $u\not\in U$.
        \item $\omni$ satisfies $U$-strategyproofness for the set $U=\{u\in \mathcal U\colon u(1)-u(2)\geq \sum_{i=3}^m u(2) - u(i)\}$ and violates $\{u\}$-strategyproofness for every utility function $u\not\in U$ if $n\geq m$. 
    \end{enumerate}
	\end{restatable}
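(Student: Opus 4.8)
The plan is to prove each of the two claims by establishing the two directions separately: $U$-strategyproofness (voters with $u\in U$ cannot gain), and $\{u\}$-manipulability whenever $u\notin U$. Throughout I would use that both $\rd^k$ and \omni are \emph{tops-only}, i.e.\ their output depends only on which alternative each voter ranks first. Thus a deviation of a voter $i$ is fully described by the pair $(t,a)$, where $t$ is $i$'s true favorite and $a$ is the alternative $i$ reports as favorite after deviating, and it suffices to track how the vector of top-ranked alternatives, and hence the outcome, changes.

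\emph{Positive direction of (1).} Fix $u\in U$ consistent with $\succ_i$, so $u(t)=u(1)$; if $a=t$ nothing changes, so assume $a\ne t$ and hence $u(a)\le u(2)$. Since $k<\tfrac n2$, at most one alternative can be top-ranked by $n-k$ or more voters, which gives three cases. (i) If some alternative $x$ already clears this threshold in $R$, then either $x=t$ (and $i$ already gets the maximal utility $u(1)$) or $x\ne t$, in which case its $\ge n-k$ supporters lie in $N\setminus\{i\}$ and $x$ still clears the threshold in $R'$; no gain. (ii) If no alternative clears the threshold in $R$ but one does in $R'$, it must be $a$ (the only count that increases), so $n_a=n-k-1$, where $n_y$ counts the voters topping $y$ in $R$, $\rd^k(R)=\rd(R)$, and $\rd^k(R')$ puts probability $1$ on $a$. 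Bounding $u(\rd(R))\ge\tfrac1n\big(n_a u(a)+n_t u(1)+(n-n_a-n_t)u(m)\big)$ and using $n_t\ge1$, $u(a)\le u(2)$, the desired inequality $u(\rd(R))\ge u(a)$ reduces to $u(1)-u(m)\ge(k+1)(u(2)-u(m))$, which is exactly the condition $u(1)-u(2)\ge k(u(2)-u(m))$ defining $U$. (iii) If neither profile clears the threshold, both outcomes are random dictatorships and deviating just shifts $\tfrac1n$ of probability from $t$ to $a$, which cannot help.

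\emph{Positive direction of (2).} The structure is identical with ``clears the threshold'' replaced by ``is topped by a strict majority''. When no majority exists in $R$, \omni returns the uniform lottery over the set $T$ of alternatives topped by at least one voter, and the deviation $(t,a)$ can change $T$ only by removing $t$ (if $i$ was its unique supporter), adding $a$ (if $a$ was previously unsupported), or creating a majority for $a$. Checking all subcases, a deviation can be beneficial only in two situations: $t$ stays in the top set and $a$ is added, or $a$ becomes a majority winner; in either case the inequality to prove is that the average of $u$ over a set $T$ with $t\in T$ is at least $u(a)$. Letting $r$ be the rank of $a$ in $\succ_i$ (so $u(a)=u(r)$, $r\ge2$), the worst admissible $T$ is $\{t\}$ together with all alternatives $i$ ranks below $a$, so it suffices that $u(1)-u(r)\ge\sum_{j=r+1}^m(u(r)-u(j))$; since $u(r)\le u(2)$, this follows from $u(1)-u(2)\ge\sum_{j=3}^m(u(2)-u(j))$, i.e.\ from $u\in U$. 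I expect the bookkeeping of which alternatives remain in the top set to be the most delicate part, so I would organize the cases by the pair (majority in $R$?, majority in $R'$?) and, inside the no-majority case, by whether $i$ is the unique supporter of $t$ and whether $a$ was already supported.

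\emph{Negative directions and main obstacle.} For $\rd^k$ with $u\notin U$ (which forces $m\ge3$), take voter $i$ topping $t$, exactly $n-k-1$ voters topping $i$'s second choice $a$, and $k$ voters topping $i$'s worst alternative; then $\rd^k(R)=\rd(R)$, and $i$'s deviation to $a$ clears the threshold, strictly improving $i$'s expected utility exactly when $k(u(2)-u(m))>u(1)-u(2)$, i.e.\ when $u\notin U$. For \omni with $u\notin U$ the same idea works, but the profile must have $n\ge m$ voters without a strict majority, which forces a split. If $m\ge4$, let $i$ top $t$, cover every alternative except $a$ (covering $t$ twice), and distribute the remaining $n-m$ voters so that every count stays at most $\lfloor n/2\rfloor$ --- possible since $(m-1)\lfloor n/2\rfloor\ge n$ for $n\ge m\ge4$ --- so that $i$'s deviation to $a$ enlarges the top set from $A\setminus\{a\}$ to $A$; a one-line computation shows this helps exactly when $u\notin U$. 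If $m=3$, instead give $a$ exactly $\lfloor n/2\rfloor$ supporters (with the remaining voters covering $t$ and $i$'s worst alternative), so that $i$'s deviation to $a$ creates a majority and replaces the uniform lottery on $A$ by certainty of $a$, again helping exactly when $u\notin U$. The main obstacle is therefore twofold and mostly combinatorial: in the positive direction of (2), correctly identifying that $T=A\setminus\{a\}$ with $a$ second-best is the worst case and bounding it by the defining inequality of $U$; and in the negative direction of (2), verifying that a majority-free counterexample profile exists for every $n\ge m$, which is precisely why $m=3$ requires a different construction than $m\ge4$.
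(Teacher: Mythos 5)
Your proposal is correct and follows essentially the same route as the paper's proof: a tops-only case analysis for both rules that reduces strategyproofness to the same key inequalities ($u(1)+ku(m)\geq(k+1)u(2)$ for $\rd^k$, and the average of $u$ over the top-ranked set being at least $u(a)$, bounded via $u\in U$, for \omni), together with essentially the same counterexample profiles for the negative directions (the $(n-k-1,1,k)$-split for $\rd^k$, and the near-majority/full-coverage profiles for \omni with the $m=3$ and $m\geq4$ cases treated separately).
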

    \begin{proof}
        We prove both claims of the theorem separately.\medskip

        \textbf{Claim (1):}
		First, we show that $\rd^k$ is $U$-strategyproof for the set $U=\{u\in\mathcal U\colon u(1)-u(2)\geq k(u(2)-u(m))\}$ for all $k\in \{1,\dots,\lfloor\frac{n-1}{2}\rfloor\}$. To this end, we fix a voter $i\in N$, a utility function $u\in U$, and preference profiles $R$ and $R'$ such that ${\succ_j}={\succ_j'}$ for all $j\in N\setminus \{i\}$ and $u$ is consistent with $\succ_i$. We will show that  $u(\rd^k(R))\geq u(\rd^k(R'))$. If neither $R$ nor $R'$ contain $n-k$ voters who agree on a most preferred alternative, $\rd^k$ is equal to $\rd$ for both profiles. Because $\rd$ is \sd-strategyproof, $\rd^k$ is $U$-strategyproof in this case. Moreover, voter $i$ can also not manipulate if at least $n-k$ voters agree on a most preferred alternative $x$ in $R$. In more detail, if voter $i$ top-ranks $x$ in $R$, he obtains his maximal utility since $\rd^k(R,x)=1$ and thus $u(\rd^k(R))\geq u(\rd^k(R'))$. On the other hand, if voter $i$ does not top-rank $x$ in $R$, at least $n-k$ voters top-rank $x$ in $R'$. This means that $\rd^k(R)=\rd^k(R')$ and therefore also $u(\rd^k(R))=u(\rd^k(R'))$.
    		
		The only remaining case is that $n-k-1$ voters top-rank an alternative $x$ in $R$, voter $i$ prefers another alternative $y$ the most, and the remaining $k$ voters top-rank alternatives in $A\setminus \{x\}$. In this case, voter $i$ might try to manipulate by reporting $x$ as his favorite alternative. In particular, this means for $R'$ that $\rd^k(R',x)=1$. On the other hand, it holds by the definition of $R$ that $\rd^k(R,x)=\frac{n-k-1}{n}$ and $\rd(R,y)\geq \frac{1}{n}$. Now, let $z$ denote voter $i$'s least preferred alternative in $R$. By our analysis so far, it holds that $u(\rd^k(R))\geq \frac{n-k-1}{n} u(x)+\frac{1}{n} u(y)+\frac{k}{n} u(z)$ and that $u(\rd^k(R'))=u(x)$. Next,
        $\frac{n-k-1}{n} u(x)+\frac{1}{n} u(y)+\frac{k}{n} u(z)\geq u(x)$ is true if and only if $u(y)-u(x) \geq k(u(x)-u(z))$. Finally, since $u(y)=u(1)$, $u(x)\leq u(2)$, and $u(z)=u(m)$, it follows from the definition of $U$ that  
        \[u(y)-u(x)\geq u(1)-u(2)\geq k(u(2)-u(m))\geq k(u(x)-u(m)).\]
        This shows that voter $i$ cannot increase his expected utility in this case either, so we conclude that $\rd^k$ is $U$-strategyproof for $U=\{u\in\mathcal U\colon u(1)-u(2)\geq k(u(2)-u(m))\}$.

        Finally, to show that $\rd^k$ fails $\{u\}$-strategyproofness for every utility function $u$ with $u(1)-u(2)<k(u(2)-u(m))$, we fix such a utility function $u$ and let $x,y,z$ denote the alternatives with $u(y)=u(1)$, $u(x)=u(2)$, and $u(z)=u(m)$. Moreover, consider the profile $R$ where $n-k-1$ voters top-rank $x$, $k$ voters top-rank $z$, and a single voter~$i$ reports a preference relation $\succ_i$ that is consistent with $u$. In particular, this means that $y$ is voter $i$'s best alternative, $x$ is his second best alternative, and $z$ is his worst alternative. For this profile, $\rd^k$ assigns probability $\frac{n-k-1}{n}$ to $x$, $\frac{1}{n}$ to $y$, and $\frac{k}{n}$ to~$z$. Hence, voter $i$'s expected utility is $u(\rd^k(R))=\frac{n-k-1}{n} u(2)+\frac{1}{n}u(1)+\frac{k}{n}u(m)$. By contrast, if voter $i$ top-ranks $x$, his expected utility is $u(2)$, and it can be verified that $u(2)>\frac{n-k-1}{n} u(2)+\frac{1}{n}u(1)+\frac{k}{n}u(m)$ if $u(1)-u(2)<k(u(2)-u(m))$. This proves that $\rd^k$ fails $\{u\}$-strategyproofness for every utility function $u$ with $u(1)-u(2)<k(u(2)-u(m))$.\medskip

        \textbf{Claim (2):} Next, we show that $\omni$ is $U$-strategyproof for the set $U=\{u\in \mathcal{U}\colon u(1)-u(2)\geq \sum_{i= 3}^m (u(2) - u(i))\}$. To this end, we again fix a voter $i$, a utility function $u\in U$, and preference profiles $R$ and $R'$ such that ${\succ_j}={\succ_j'}$ for all $j\in N\setminus \{i\}$ and $u$ is consistent with $\succ_i$. We will show that $u(\omni(R))\geq u(\omni(R'))$. For this, we proceed with a case distinction on the lotteries chosen for $R$ and $R'$. First, assume that $\omni(R,x)=1$ for some alternative $x\in A$, which means that more than half of the voters report $x$ as their best alternative. Now, if $i$ top-ranks $x$, he obtains his maximal utility and he cannot manipulate. On the other hand, if $i$ does not top-rank $x$, then a majority of the voters top-ranks $x$ also in $R'$, which implies that $\omni(R')=\omni(R)$ and therefore $u(\omni(R'))=u(\omni(R))$.
		
		Next, consider the case that $\omni$ returns for both $R$ and $R'$ the uniform lottery over the top-ranked alternatives of the respective profiles. Let $S=\{z\in A \colon \omni(R,z)>0\}$ and $T=\{z\in A \colon \omni(R',z)>0\}$ denote the sets of alternatives with positive winning chance in $R$ and $R'$, respectively. Moreover, let $x$ be the most preferred alternative of voter $i$ in $R$, and let $y$ be his most preferred alternative in $R'$. If voter $i$ is the only voter in $R$ that top-ranks $x$ in $R$, then $T=(S\setminus \{x\})\cup \{y\}$. If $y\not\in S$, this means that $\omni(R,z)=\omni(R',z)$ for all $z\in A\setminus \{x,y\}$, $\omni(R',y)=\omni(R',x)$, and $\omni(R',x)=\omni(R,y)=0$. Since $u(x)>u(y)$ as voter $i$ prefers $x$ to $y$, it follows that $u(\omni(R))\geq u(\omni(R'))$. On the other hand, if $y\in S$, then it holds tht $\omni(R',z)=\frac{1}{|S|-1}$ for all $z\in S\setminus \{x\}$ as only $|S|-1$ alternatives are top-ranked. Put differently, this means that we redistributed the probability of $x$ to the alternatives in $S\setminus \{x\}$. Since voter $i$ top-ranks $x$ in $R$, we have that $u(x)>u(z)$ for all $z\in S\setminus \{x\}$, so it follows again that $u(\omni(R))\geq u(\omni(R'))$. 
        		
		As the second subcase, suppose that another voter top-ranks voter $i$'s best alternative~$x$. Then, it holds that $T=S\cup \{y\}$. If $y$ was already top-ranked in $R$, it further follows that $y\in S$, so the outcome does not change. On the other hand, if $y\not\in S$, then $\omni(R',z)=\frac{1}{|S|+1}$ for all $z\in S\cup \{y\}$ and $\omni(R,z)=\frac{1}{|S|}$ for all $z\in S$. In this case, we compute that 
        \begin{align*}u(\omni(R))-u(\omni(R'))&=\sum_{z\in S} \left(\frac{1}{|S|}-\frac{1}{|S|+1}\right)u(z) - \frac{1}{|S|+1}u(y)\\
        &=\frac{1}{|S|(|S|+1)} \sum_{z\in S} (u(z) - u(y)).
        \end{align*}
        Moreover, since $u\in U$, it holds that $u(1)-u(2)\geq \sum_{j=3}^m u(2)-u(j)$, which equivalently means $(u(1)-u(2))+\sum_{j=3}^{m} u(j)-u(2)\geq 0$. Since voter $i$'s favorite alternative $x$ is in $S$ and $u(y)\leq u(2)$, we derive that 
        \begin{align*}
            \sum_{z\in S} (u(z) - u(y))\geq \sum_{z\in S} (u(z) - u(2))\geq u(1)-u(2)+\sum_{j=3}^m u(j)-u(2)\geq 0.
        \end{align*}
        Combined with our last equation, this means means $u(\omni(R))\geq u(\omni(R'))$, thus proving that voter $i$ cannot manipulate in this case. 
		
		As the last case, assume that the set $S=\{z\in A\colon \omni(R,z)>0\}$ contains at least two alternatives and that $\omni(R',y)=1$ for some alternative $y\in A$. 
        This is only possible if voter $i$ reports $y$ as his favorite alternative in $R'$ but not in $R$. 
        Similar to the last case, we hence compute that $u(\omni(R))-u(\omni(R'))=\frac{1}{|S|} \sum_{z\in S} u(z)-u(y)$. 
        Moreover, since $u\in U$, it follows that $\sum_{z\in S} u(z)-u(y)\geq 0$ because voter $i$'s favorite alternative $x$ is in $S$ and $u(y)\leq u(2)$. 
        Hence, voter $i$ cannot manipulate in this case either, which proves that $\omni$ is $U$-strategyproof for the set $U=\{u\in \mathcal U\colon u(1)-u(2)\geq \sum_{i=3}^m u(2)-u(i)\}$.  
		
		Finally, we show that $\omni$ violates $\{u\}$-strategyproofness for every utility function~$u$ with $u(1)-u(2)< \sum_{i=3}^m u(2)-u(i)$ if $n\geq m$. For this, we fix such a utility function $u$ and note that our assumption equivalently means that $u(2)>\frac{\sum_{j=1}^m u(j)}{m}$. Moreover, suppose that the alternatives $x_1,\dots, x_m$ are ordered such that $u(x_1)>\dots>u(x_m)$. 
        Now, if $m=3$, consider the profile $R$ where $x_2$ is top-ranked by $\lfloor\frac{n}{2}\rfloor$ voters, $\lceil\frac{n}{2}\rceil-1$ voters top-rank $x_3$ and a single voter $i$ reports the preference relation $x_1,x_2,x_3$. For this profile, $\omni$ assigns a probability to $\frac{1}{3}$ to each alternative, so $u(\omni(R))=\frac{1}{m}\sum_{j=1}^m u(j)$. On the other hand, if voter $i$ top-ranks $x_2$, it holds that this alternative is chosen with probability $1$. This means that voter $i$ can manipulate since $u(2)>\frac{\sum_{j=1}^m u(j)}{m}$. Next, if $m\geq 4$, consider a profile $R$ where voter $i$ reports the preference relation $x_1,x_2,\dots,x_m$, every alternative $x_j\in A\setminus \{x_2\}$ is top-ranked by a voter other than $i$, and no alternative is top-ranked by more than half of the voters. For this profile, $\omni$ randomizes uniformly over the alternatives $A\setminus \{x_2\}$, so the utility of voter $i$ is $\frac{1}{m-1}(u(1)+\sum_{j=3}^m u(j))$. On the other hand, if voter $i$ top-ranks $x_2$, $\omni$ randomizes uniformly over $A$, resulting in an expected utility of $\frac{1}{m}\sum_{j=1}^m u(j)$. Finally, it can be checked that $\frac{1}{m}\sum_{j=1}^m u(j)>\frac{1}{m-1}(u(1)+\sum_{j=3}^m u(j))$ if $u(1)-u(2)< \sum_{i=3}^m u(2)-u(i)$, thus demonstrating that voter $i$ can indeed manipulate. 
	\end{proof}
	
	While it is positive that $k$-unanimity and $U$-strategy\-proofness can be simultaneously satisfied, the sets $U$ for which $\rd^k$ and $\omni$ are $U$-strategyproof become rather restrictive for large values of $k$ and $m$. This raises the question of whether there are SDSs that satisfy $U$-strategyproofness and $k$-unanimity for larger sets of utility functions~$U$. As we show next, at least for rank-based SDSs, our results are already close to optimal as no such SDS satisfies $k$-unanimity and $U$-strategyproofness for a significantly larger set of utility functions than $\rd^k$ or \omni. 
	
	\begin{restatable}{theorem}{thmb}\label{thmb}
		No rank-based SDS satisfies $u^\Pi$-strategy\-proofness and $k$-unanimity for $k\in \{1,\dots,\lfloor\frac{n-1}{2}\rfloor\}$ if $m\geq 3$, $n\geq 3$, and $u(1)-u(2)<\sum_{i=\max(3, m-k+1)}^m u(2)-u(i)$. 
	\end{restatable}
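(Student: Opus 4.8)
The plan is to argue by contradiction. Suppose $f$ is a rank-based SDS that is $k$-unanimous and $u^\Pi$-strategyproof for some utility function $u$ with $u(1)-u(2)<\sum_{i=\max(3,m-k+1)}^{m}(u(2)-u(i))$. After renaming the alternatives we may assume $u(x_j)=u(j)$ for all $j$, so that $x_1\succ x_2\succ\dots\succ x_m$ is the preference relation that $u$ is consistent with. The goal is to exhibit a profile $R$, a voter $i$, and a deviation $R'$ of $i$ such that $u(f(R))<u(f(R'))$, which contradicts $u^\Pi$-strategyproofness.

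For the critical profile I would take $R$ of a ``random-dictatorship-like'' shape: a distinguished voter $i$ reports the canonical preference $x_1,x_2,\dots,x_m$; exactly $n-k-1$ voters report a preference topped by $x_2$; and the remaining $k$ voters report preferences topped by the least preferred alternatives $x_{\max(3,m-k+1)},\dots,x_m$, each of these at least once (and exactly once when $k\le m-2$). This shape is chosen for two reasons. First, voter $i$ can deviate to a preference $R'$ topped by $x_2$; then $n-k$ voters top-rank $x_2$, so $k$-unanimity forces $f(R')=\delta_{x_2}$ and hence $u(f(R'))=u(2)$. Second, the random-dictatorship outcome $\rd(R)$ puts weight $\tfrac1n$ on $x_1$, weight $\tfrac{n-k-1}{n}$ on $x_2$, and the remaining weight $\tfrac kn$ on the bad alternatives, each of which is top-ranked at least once; since the corresponding deficit $\sum_j(u(2)-u(\ell_j))$ (where $\ell_j$ indexes the top choices of the $k$ bad voters) dominates $\sum_{i=\max(3,m-k+1)}^{m}(u(2)-u(i))>u(1)-u(2)$, one gets $u(\rd(R))<u(2)$. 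So it suffices to show that $f$ places so much probability on $x_1$ and on the bad alternatives that $u(f(R))<u(2)$; the extremal case motivating the threshold is $f(R)=\rd(R)$.

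Establishing $u(f(R))<u(2)$ is the remaining work. The tools are: (i) for any single voter, and by \Cref{prop:properties}(4) any coalition of voters with a common preference, whose deviation makes some alternative top-ranked by $n-k$ voters, $u^\Pi$-strategyproofness yields a linear inequality bounding $u'(f(R))$ from below by the utility that voter assigns to that alternative, where $u'$ is the renaming of $u$ aligned with the voter's preference in $R$; and (ii) rank-basedness, which lets one identify $f(R)$ with $f$ on sibling profiles sharing $R$'s rank matrix. The idea is to fix the lower parts of all preferences in $R$ so that each of $x_1$, $x_2$, and the bad alternatives can be brought to $n-k$ top positions by such a deviation, and so that $R$'s siblings supply the equalities that, together with the above inequalities and the hypothesis on $u$, force the outcome lottery to coincide with $\rd(R)$; one then anchors the argument by telescoping these inequalities along a walk of single-voter changes that starts at a profile pinned down by $k$-unanimity.

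The main obstacle is exactly this last point. Because $u^\Pi$-strategyproofness is dramatically weaker than \sd-strategyproofness, a single preference change gives only one scalar inequality rather than a full stochastic-dominance comparison, so a naive collection of such inequalities leaves $f(R)$ badly underdetermined — it does not even rule out degenerate outcomes such as $f(R)=\delta_{x_2}$. The crux is therefore the careful co-design of the lower parts of the preferences in $R$ and of the family of deviating voters and coalitions, so that rank-basedness contributes precisely the missing constraints and the combined system becomes infeasible right at the threshold $u(1)-u(2)=\sum_{i=\max(3,m-k+1)}^{m}(u(2)-u(i))$. In addition, the regimes $k\ge m-2$ (where the threshold coincides with that of $\omni$ and the $k$ bad voters must cover all of $x_3,\dots,x_m$) and $k<m-2$ (where the threshold is a sum of only $k$ terms and the bad voters sit on the $k$ worst alternatives) call for slightly different critical profiles, and small values of $n$ and $m$, as well as cases in which some $f(R,x_i)=0$, should be verified directly.
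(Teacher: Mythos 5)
There is a genuine gap, and you have in fact named it yourself: the entire content of the theorem is the step you label ``the remaining work.'' Your plan is to exhibit a single random-dictatorship-like profile $R$ (one voter with the canonical ranking, $n-k-1$ voters topping $x_2$, $k$ voters topping the worst alternatives) and to show $u(f(R))<u(2)$, so that the canonical voter gains by deviating to trigger $k$-unanimity for $x_2$. But no local argument can establish $u(f(R))<u(2)$: a hypothetical $k$-unanimous, $u^\Pi$-strategyproof rule could simply set $f(R,x_2)=1$ (or even $f(R,x_1)=1$) at that profile without violating any constraint visible from $R$ and its one-step deviations, since $u^\Pi$-strategyproofness only yields lower bounds on deviating voters' expected utilities and rank-basedness only equates outcomes across sibling profiles. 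You acknowledge exactly this (``it does not even rule out degenerate outcomes such as $f(R)=\delta_{x_2}$'') and then gesture at a ``careful co-design'' of lower parts of preferences and coalitions, but that co-design is precisely the missing construction; as written, the proposal reduces the theorem to an unproved claim of the same difficulty.

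The paper's proof runs in the opposite direction and avoids having to pin down $f$ at any profile where no alternative is near-unanimously top-ranked. It anchors at a profile where $n-k^*$ voters (with $k^*=\min(k,m-2)$) top-rank $x^*$, so $k$-unanimity forces $f(\cdot,x^*)=1$, and then shows this probability $1$ \emph{persists} as voters demote $x^*$ from first to second place one at a time. The engine for that persistence is the step you are missing: a cyclic family of $k^*+1$ profiles $R^0,\dots,R^{k^*}$ that all share the same rank matrix (so rank-basedness forces $f(R^0)=\dots=f(R^{k^*})$), in each of which a different voter can deviate to a profile where $x^*$ is known to get probability $1$; summing the resulting $k^*+1$ scalar strategyproofness inequalities, and using $u(1)-u(2)<\sum_{i=m-k^*+1}^m\bigl(u(2)-u(i)\bigr)$, makes the system infeasible unless $f(R^i,x^*)=1$. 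Iterating this construction (together with the observation that voters may freely reorder the alternatives other than $x^*$ without changing the outcome) leads to a profile where only $k^*$ voters top-rank $x^*$ yet $f$ still assigns it probability $1$, while $n-k^*\geq n-k$ voters top-rank $x_0$, contradicting $k$-unanimity for $x_0$. Your proposal contains neither the same-rank-matrix cyclic family nor the summation trick nor the iteration from the $k$-unanimity anchor, so the central mechanism of the proof is absent.
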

	\begin{proof}
		Fix some value $k\in \{1,\dots, \lfloor\frac{n-1}{2}\rfloor\}$ and let $k^*=\min(k,m-2)$, which means that $\sum_{i=m-k^*+1}^m u(2)-u(i)=\sum_{i=\max(3,m-k+1)}^m u(2)-u(i)$. We assume for contradiction that there is a rank-based SDS $f$ for $m\geq 3$ alternatives and $n\geq 3$ voters that satisfies $k$-unanimity and $u^\Pi$-strategyproofness for some utility function $u$ with $u(1)-u(2)<\sum_{i=m-k^*+1}^m u(2)-u(i)$. For deriving a contradiction, we proceed in two steps: first, we discuss a construction that allows us to weaken an alternative $x$ that is currently assigned probability $1$ from first place to second place without affecting the outcome if sufficiently many voters top-rank $x$. Secondly, we repeatedly use this construction to derive a profile $R^*$ in which $x$ gets probability $1$ even though only $k^*$ voters report it as their best choice. Moreover, we can ensure that the remaining $n-k^*\geq n-k$ voters top-rank another alternative $y$, so $k$-unanimity is violated for $R^*$.\medskip
		
		\textbf{Step 1:} Let $S=\{x_0, \dots, x_{k^*}\}$ denote a set of $k^*+1$ alternatives and let $\hat{x}_i=x_{i\text{ mod } (k^*+1)}$ to simplify notation. Moreover, let $x^*$ denote an alternative not in $S$ and note that such an alternative exists as $k^*\leq m-2$. Our goal is to find profiles $R^0,\dots, R^{k^*}$ such that (i) $r^*(R^i)=r^*(R^j)$ for all $i,j\in \{0,\dots, k^*\}$, and (ii) in every profile $R^i$, there is a voter $j^*$ with preference $\hat x_i, x^*, *, \hat x_{i+1}, \hat x_{i+2},\dots, \hat x_{i+k^*}$. Given these profiles, we show that $f(R^i,x^*)=1$ if $f(\hat{R}^i, X^*)=1$ for all $i\in \{0,\dots, {k^*}\}$, where $\hat R^i$ denotes the profile derived from $R^i$ by letting voter $j^*$ swap his favorite alternative $\hat  {x}_i$ with $x^*$. For the sake of simplicity, we focus subsequently on the case that there are $n=2k^*+1$ voters. If there are more voters, we can pick a suitable subset of $2k^*+1$ voters and apply our construction to these voters while keeping the preferences of the other voters constant. 
		
		We now define the profiles $R^0,\dots, R^{k^*}$. In the profile $R^i$, the voters $j\in \{1,\dots, k^*+1\}$ with $j\neq i$ have the preference relations $x^*,\hat x_{j},*, \hat x_{j+1}, \dots,  \hat x_{j+k^*}$. Moreover, voter $i$ has the preference $\hat x_{i}, x^*, *, \hat x_{i+1}, \dots, \hat x_{i+k^*}$, i.e., the construction of voter $i$'s preference relation differs from the preference relations of the other voters $j$ with $j\leq k^*+1$ only in that his best two alternatives are swapped. Note that the alternatives $\{x_0,\dots, x_{k^*}\}$ form a ''cylic'' subprofile for these $k^*+1$ voters. Next, the voters $j\in \{k^*+2,\dots, 2k^*+1\}$ with $j\neq k^*+1+i$ have the preference relation $\hat x_{j}, \hat x_0, *, \hat x_1, \dots, \hat x_{j-1}, \hat x_{j+1}, \dots, \hat x_{k^*},x^*$. Finally, the voter $j=k^*+1+i$ has the same preference relation except that he swaps $\hat x_{j}=x_i$ with $\hat x_0=x_0$, i.e., ${\succsim^i_{k^*+1+i}}=\hat x_0, \hat x_{j}, *, \hat x_1, \dots, \hat x_{j-1}, \hat x_{j+1}, \dots, \hat x_{j+k^*},x^*$. We note that in $R^0$, alternative $x_0=\hat x_0$ is the second ranked by all voters $j$ with $j\geq k^*+2$ because there is no voter with $j\geq k^*+2$ and $j=k^*+1+0$. Moreover we assume that all voters have the same preferences on the alternatives in $Y=A\setminus \{x^*,x_0, \dots, x_{k^*}\}$ (these alternatives were abbreviated by the $*$ symbol in all previous preference relations). 
		
		Note that all profiles $R^i$ have the same rank matrix because $r^*(R^i, x)=r^*(R^0,x)$ for all $x\in A$ and $i\in \{1, \dots, k^*\}$. For the alternatives in $Y$, this claim holds since every voter ranks these alternatives at the same positions in all of our profiles. For the alternatives in $A\setminus Y$, this follows because the profile $R^0$ differs from every other profile $R^i$ only in the preference relations of voters $i$, $k^*+1$, and $k^*+i+1$. Moreover, the preferences of these voters only differ in swaps between $x^*$, $x_0=\hat x_{k+1}$, and $x_i=\hat x_i$. In more detail, $\succ_i^i$ is derived from $\succ_i^0$ by reinforcing $x_i$ against $x^*$, $\succ_{k^*+1}^i$ is derived from $\succ_{k^*+1}^0$ by reinforcing $x^*$ against $x_0$, and $\succ_{k^*+i+1}^i$ is derived from $\succ_{k^*+i+1}^0$ by reinforcing $x_0$ against $x_i$. Because all these swaps happen between first and second ranked alternatives, the rank vectors of the alternatives in $A\setminus Y$ are equal in the profiles $R^0$ and $R^i$. This proves that $r^*(R^0)=r^*(R^i)$ for all profiles $R^i$, and rank-baseness consequently implies that $f(R^i)=f(R^0)=f(R^j)$ for all $i,j\in \{0, \dots, k^*\}$. 
		
		Finally, it remains to show that $f(R^i,x^*)=1$ for all $i\in \{0, \dots, k^*\}$. We suppose for this that $f(\hat R^i,x^*)=1$ for all $i\in \{0, \dots, k^*\}$, where $\hat R^i $ denotes the profiles derived from $R^i$ by reinforcing $x^*$ against $\hat x_i$ in the preference of voter $i$ (or $k^*+1$ if $i=0$). By our assumption, this voter can ensure that his expected utility is $u(2)$ if he deviates to $\hat R^i$. Hence, $u^\Pi$-strategyproofness entails that the expected utility of voter $i$ in $R^i$ is at least $u(2)$, which means that the following inequality holds. Note that we assume here that $u$ is permuted to be consistent with voter $i$'s preference relation.
		\begin{align*}
		f(R^i, \hat x_{i})u(1) + f(R^i, x^*)u(2) 
		+ \sum_{y\in Y} f(R^i,y)u(y) 
		 + \sum_{j=1}^{k^*} f(R^i, \hat x_{i+j}) u(\hat x_{i+j})\geq u(2)
		\end{align*}
		
		We reformulate this inequality to highlight the similarity to our condition on the utility function $u$. 	
		\begin{align*}
		f(R^i, \hat x_{i}) (u(1) - u(2))& \geq \sum_{y\in Y} f(R^i,y) (u(2) - u(y)) + \sum_{j=1}^{k^*} f(R^i, \hat x_{i+j}) (u(2) - u(\hat x_{i+j})) 
		\end{align*}
		
		Furthermore, we derive an analogous inequality for every profile $R^j$ with $j\in \{0,\dots, k^*\}$. Since $f(R^i)=f(R^j)$ for all $i,j\in \{0, \dots, k^*\}$, we can substitute $f(R^j,x)$ with $f(R^i,x)$ in these inequalities. Moreover, for every $i\in \{0, \dots, {k^*}\}$, alternative $\hat x_i$ is top-ranked by the manipulator in $R^i$, and for every $r\in \{m-k^*+1, \dots, m\}$, there is a single profile $R^j$ such that the manipulator ranks $\hat x_i$ at position $r$. Hence, we derive the following inequality by summing up all constraints on $f(R^i)$.
		\begin{align*}
		\sum_{j=0}^{k^*} f(R^i, x_{j})  (u(1) - u(2)) \geq \,\,&
		(k^*+1)\sum_{y\in Y} f(R^i,y) (u(2) - u(y)) \\
		&+ \sum_{j=0}^{k^*} f(R^i, x_{j}) \sum_{\ell=m-k^*+1}^m (u(2) - u(\ell))
		\end{align*}

        {\sloppy
        Next, we note that $u(1)-u(2)<\sum_{j=m-k^*+1}^m (u(2)-u(j))$ by assumption, so 
        \[\sum_{j=0}^{k^*} f(R^i, x_{j})  (u(1) - u(2))<\sum_{j=0}^{k^*} f(R^i, x_{j}) \sum_{l=m-k^*+1}^m (u(2) - u(l)).\] 
        As $u(2)-u(y)>0$ for all $y\in Y$, it follows that the above inequality can only be true if $f(R^i,x)=0$ for all $x\in A\setminus \{x^*\}$. In turn, this implies that $f(R^i,x^*)=1$, as desired.}\medskip
		
		\textbf{Step 2:} We next use Step 1 to derive a profile $R$ in which $x^*$ is top-ranked by only $k^*$ voters but it is chosen with probability $1$. For this, we start at the profile $\bar{R}^0$ in which the first $n-k^*$ voters prefer alternative $x^*$ the most and the remaining voters prefer $x^*$ uniquely the least. It follows from $k$-unanimity that $f(\bar{R}^0,x^*)=1$ as $k^*\leq k$. Moreover, $u^\Pi$-strategyproofness entails that all voters can reorder the alternatives in $A\setminus \{x^*\}$ arbitrarily without affecting the outcome. In more detail, if $x^*$ does not obtain probability $1$ after a voter who top-ranks $x^*$ reorders the remaining alternatives, this voter can manipulate by undoing his deviation. On the other hand, if a voter who prefers $x^*$ the least can reduce the probability of $x^*$ by changing his preferences, he can increase his expected utility for every utility function. 
        
        Hence, we can pick a subset $I$ of voters who prefer $x^*$ the most with $|I|=k^*+1$ and the $k^*$ voters who prefer $x^*$ the least and assign them the preferences in the profile $\hat{R}^i$ for every $i\in \{0, \dots, k^*\}$ without affecting the outcome. Consequently, we can use the last step to derive a profile $\bar{R}^1$ such that $f(\bar{R}^1, x^*)=1$, $n-k^*-1$ voters prefer $x^*$ the most, one voter reports $x_0, x^*, *, x_1, \dots, x_{k^*}$, and the remaining voters prefer $x^*$ the least. Moreover, it is easy to see that we can repeat this step as long as at least $k^*+1$ voters top-rank $x^*$. By repeatedly applying this construction, we derive a profile $\bar{R}$ such that $f(\bar R, x^*)=1$, $k^*$ voters prefer $x^*$ the most, $n-2k^*$ voters report $x_0, x^*, *, x_1, \dots, x_{k^*}$, and $k^*$ voters report $x^*$ as their least preferred outcome. 
		
		Finally, we recall that the voters who prefer $x^*$ the least can reorder the alternatives in $A\setminus \{x^*\}$ arbitrarily without affecting the outcome if $x^*$ is chosen with probability $1$. We thus let these voters top-rank $x_0$ and $u^\Pi$-strategyproofness requires for the resulting profile $R^*$ that $f(R^*,x^*)=1$. However, in $R^*$, $n-k^*\geq n-k$ voters top-rank $x_0$, so $k$-unanimity requires that $f(R^*,x_0)=1$. This is the desired contradiction which shows that no rank-based SDS is both $k$-unanimous and $u^{\pi}$-strategyproof for a utility function $u$ with $u(1)-u(2)<\sum_{i=m-k^*+1}^m u(2)-u(i)$.
	\end{proof}

    To better understand the results of this section, we illustrate our bounds in \Cref{fig1}. For this figure, we assume that there are $5$ alternatives and a large number of voters $n\geq 11$, and we fix all utilities but $u(1)$. Hence, we can compute the values of $u(1)$ for all SDSs of \Cref{thma} such that the considered SDS is $u^\Pi$-strategyproof. The figure shows that for $\rd^k$, the required value of $u(1)$ increases in $k$ and the bound of $\omni$ is independent of $k$. Moreover, the required values of $u(1)$ are quite large compared to $u(2)$ for all SDSs but \rd. However, the red area shows the values of $u(1)$ for which \Cref{thmb} applies and hence, these large values are indeed required. The white area indicates a small gap between the positive results in \Cref{thma} and the impossibility theorem in \Cref{thmb} for $1<k<\lfloor\frac{n-1}{2}\rfloor$.

       	\begin{figure}\centering
		\begin{tikzpicture}
		\fill[red!30] (1,1) -- (1,2) -- (2,2) -- (2,8/3) -- (3,8/3) -- (3,9/3) -- (5,9/3) -- (5,1); 
		\fill[black!30] (0,0) -- (0,1) -- (5,1) -- (5,0); 
		\fill[pattern=north west lines, pattern color=blue!70] (0,1) -- (1,1) -- (1,4) -- (0,4); 
		\fill[pattern=north east lines, pattern color=green!70] (0,2) -- (2,2) -- (2,4) -- (0,4);
		\fill[pattern=horizontal lines, pattern color=magenta] (0,3) -- (3,3) -- (3,4) -- (0,4); 
		\fill[pattern=crosshatch dots, pattern color=orange] (0,3) -- (5,3) -- (5,4) -- (0,4); 
		\draw[line width = 0.15mm, ->] (0, 0) -- (5.2, 0) node[right] {$k$};
		\draw[line width = 0.15mm, ->] (0, 0) -- (0, 4.2) node[above] {$u(1)$};
		\draw[line width = 0.1mm] (1, 0) -- (1, -0.2) node[below] {1};
		\draw[line width = 0.1mm] (2, 0) -- (2, -0.2) node[below] {2};
		\draw[line width = 0.1mm] (3, 0) -- (3, -0.2) node[below] {3};
		\draw[line width = 0.1mm] (4, 0) -- (4, -0.2) node[below] {4};
		\draw[line width = 0.1mm] (5, 0) -- (5, -0.2) node[below] {5};
		
		\draw[line width = 0.1mm] (0,1) -- (-0.2,1) node[left] {3};
		\draw[line width = 0.1mm] (0,2) -- (-0.2,2) node[left] {6};
		\draw[line width = 0.1mm] (0,3) -- (-0.2,3) node[left] {9};
		\draw[line width = 0.1mm] (0,4) -- (-0.2,4) node[left] {12};
		
		\node[align=center] at (3.5,2) {no rank-based \\SDSs};
		\node at (0.5, 1.5) {$\rd$};
		\node at (1.5, 2.5) {$\rd^1$};
		\node at (2.5, 3.5) {$\rd^2$};
		\node at (4, 3.5) {$\omni$};
		\draw[line width = 0.2mm] (0, 1) -- (5.2,1) node[right] {$u(2)$};
		\end{tikzpicture}\vspace{-0.3cm}

		\caption{Illustration of \Cref{thma,thmb}. We assume that there are $5$ alternatives and consider a utility function $u$ with $u(2)=3$, $u(3)=2$, $u(4)=1$, and $u(5)=0$. The figure shows for which values of $u(1)$ the SDSs $\rd$ (blue area), $\rd^1$ (green area), $\rd^2$ (magenta area), and $\omni$ (orange area) are $u^\Pi$-strategyproof on the vertical axis. The horizontal axis illustrates the values of $k$ for which these SDSs are $k$-unanimous. The red area displays the impossibility of \Cref{thmb} and the gray area marks the infeasible values of $u(1)$ with $u(1)\leq u(2)$.}
		\label{fig1}
	\end{figure}
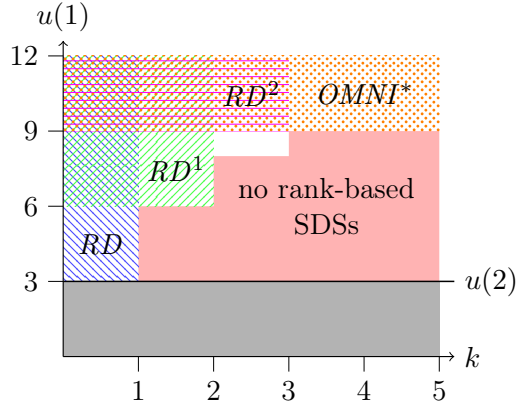

    	\begin{remark}\label{rem1}
		 Most of the bounds of \Cref{thmb} are tight: if $m=2$, \omni and $\rd^k$ are even \sd-strategyproof, and if $n=2$, $k$-unanimity is not well-defined for $k>0$. Furthermore, the condition on the utility functions is almost tight: $\rd^1$ shows that the bound is tight for $1$-unanimity, and $\omni$ shows that the bound is tight if $k\geq m-2$. Finally, $\rd^{k}$ shows that no constraint of the type $u(1)-u(2)\leq \sum_{i=m-k+1}^m u(2)-u(i) + \epsilon$ with $\epsilon>0$ can result in an impossibility because we can always find a utility function $u$ such that $\sum_{i=m-k+1}^m u(2)-u(i) + \epsilon \geq u(1)-u(2)\geq k(u(2)-u(m))$ by making the difference between $u(i)$ and $u(m)$ for $i\geq 3$ sufficiently small. Nevertheless, it remains open to find rank-based SDSs that satisfy $U$-strategyproofness and $k$-unanimity for $U=\{u\in\mathcal U\colon u(1)-u(2)=\sum_{i=m-k+1}^m u(2)-u(i)\}$ and $2\leq k\leq m-3$. 
	\end{remark}

    \begin{remark}
        It can be shown for every $k\in \{1,\dots,\lfloor\frac{n}{2}\rfloor\}$ and set $U\subseteq \mathcal{U}$ that if there is a $k$-unanimous and $U$-strategyproof SDS $f$ for $m$ alternatives and $n$ voters, there is also an SDS $f'$ that satisfies these conditions for $m$ alternatives $n'>n$ voters. In particular, $f'$ can be constructed by choosing a set of $n$ voters uniformly at random and applying $f$ to the preferences of these voters. This insight can be used to show that \Cref{thmb} is tight for $2$-unanimity. Specifically, it can be verified that the following SDS $f^1$ is $U$-strategyproof for the set $U=\{u\in\mathcal{U}\colon u(1)-u(2)\geq 2u(2)-u(m-1)-u(m)\}$ if $n=5$: if at least four alternatives are top-ranked, $f^1$ agrees with \rd, and it agrees with \omni otherwise. By our previous observation, we can generalize this SDS to an arbitrary number of voters $n>5$ while guaranteeing $U$-strategyproofness and $2$-unanimity. 
    \end{remark}

    \begin{remark}
        When omitting rank-basedness, one can define SDSs that are $k$-unanimous and $U$-strategyproof for larger sets of utility functions $U$ than permitted by \Cref{thmb}. For instance, when $m=3$ and $n=4$, we have shown with the help of a computer that the following SDS $f^2$ is $U$-strategyproof for the set $U=\{u\in\mathcal{U}\colon 2(u(2)-u(3))\geq u(1)-u(2)\geq \frac{1}{2}(u(2)-u(3))\}$. If three or four voters top-rank an alternative $x$, then $f^2(R,x)=1$. If two alternatives $x,y$ are top-ranked by two voters each, then $f^2(R,x)=f^2(R,y)=\frac{1}{2}$. Finally, if an alternative $x$ is top-ranked by two voters, and the other two alternatives are top-ranked once, we need a further case distinction depending on the voters who do not top-rank $x$: if both of these voters place $x$ second, then $f^2(R,x)=1$. If both place $x$ last, then $f^2(R,x)=\frac{1}{2}$ and $f^2(R,y)=f^2(R,z)=\frac{1}{4}$. Finally, if one of them places $x$ second and the other one last, then $f^2(R,x)=\frac{4}{7}$, the alternative $y$ that is preferred to $x$ by two voters obtains probability $f^2(R,y)=\frac{2}{7} $, and the last alternative $z$ has probability $f^2(R,z)=\frac{1}{7}$. While our computer-aided approach can also be used for larger values of $m$ and $n$, we see little value in such technical SDSs with unclear closed-form representation.
    \end{remark}
	
	\subsection{Condorcet-consistency}\label{subsec:Condorcet}
	Motivated by the existence of SDSs that are $k$-unanimous and $U$-strategyproof for non-empty sets of utility functions $U$, we next turn to the question of whether stronger fairness notions are still compatible with $U$-strategyproofness. Unfortunately, we find a negative answer to this question when considering Condorcet-consistency. In particular, we will show that it is impossible to design Condorcet-consistent SDSs that are $u^\Pi$-strategyproof for any utility function $u\in \mathcal{U}$. We note that this result significantly strengthens the known impossibility of \sd-strategyproof and Condorcet-consistent SDSs \citep[e.g.,][]{BLR23a} as our theorem only relies on a single canonical utility function.
	
	\begin{restatable}{theorem}{thmc}\label{thmc}
		Every Condorcet-consistent SDS fails $u^\Pi$-strategyproofness for all utility functions $u\in \mathcal U$ if $m\geq 4$, $n\geq 5$ and $n\neq6$, $n\neq8$. 
	\end{restatable}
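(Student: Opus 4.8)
The plan is to suppose, towards a contradiction, that some SDS $f$ is Condorcet-consistent and $u^\Pi$-strategyproof for a fixed $u\in\mathcal U$ (so $u(1)>u(2)>\dots>u(m)$), and to squeeze out an infeasible system of linear inequalities in the weights $u(1),\dots,u(m)$. I would first make two harmless reductions. We may assume $f$ is neutral: averaging $f$ over all renamings of the alternatives preserves Condorcet-consistency (a renaming maps the Condorcet winner to the Condorcet winner), and preserves $u^\Pi$-strategyproofness, because $u^\Pi$ is a symmetric set, so each renamed copy of $f$ is again $u^\Pi$-strategyproof and, by Proposition~\ref{prop:properties}(3), so is their average. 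We may also reduce to $m=4$: the construction only ever rearranges four distinguished alternatives $a,b,c,d$ while every voter keeps all remaining alternatives pinned below $d$, and one checks the extra alternatives do not affect the argument. Throughout I write $u_i$ for the unique utility function in $u^\Pi$ consistent with voter $i$'s report and freely invoke Proposition~\ref{prop:properties}(4) to move a whole block of identically-minded voters in one shot.

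The only leverage available is this: whenever a profile admits a Condorcet winner, Condorcet-consistency forces $f$ to be the point mass on it; and $u^\Pi$-strategyproofness applied to a single-voter deviation linking two profiles yields exactly one scalar inequality per direction. In particular, across a step at which the Condorcet winner changes from $x$ to $y$, the deviating voter must strictly prefer $x$ to $y$ in the first profile and $y$ to $x$ in the second — a genuine restriction. So the strategy is to assemble a short loop of profiles $R^0,R^1,\dots,R^t=R^0$, each obtained from the previous by one voter changing their report, arranged so that several of the $R^j$ have Condorcet winners but not all the same one, so that each dethroning step is compatible with $u^\Pi$-strategyproofness only under a nontrivial constraint, and so that the conjunction of all these constraints — together with the point masses forced at the Condorcet-winner profiles, the indirect bounds strategyproofness places on $f$ at the no-Condorcet-winner profiles in between, and whatever probabilities neutrality happens to pin down — has no solution with $u(1)>\dots>u(m)$.

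Engineering such a loop is the main obstacle, for two reasons. First, unlike \sd-strategyproofness, a single utility function gives no monotonicity of an alternative's probability under a favourable deviation, only one linear inequality per comparison; so the loop must be built very tightly — which pair of alternatives each mover swaps, and the bookkeeping of the probabilities assigned to the ``losing'' alternatives at every profile, must be chosen so that the inequalities telescope into a contradiction rather than a loose constraint. I expect this to need a case distinction on the position of $a$ (and of $b$, $c$) in each mover's preference, and to close by checking that a small explicit linear system is infeasible. Second, realizing the loop with the right majority margins — each dethroning step needs a margin of exactly one or two between the outgoing and incoming Condorcet winner, while all other margins stay safely one-sided, and for even $n$ one additionally needs spare voters to keep every profile free of majority ties — has a fixed voter cost; this is what forces $n\ge5$, and for even $n$ raises the threshold to $n\ge10$, leaving $n\in\{6,8\}$ as genuine exceptions. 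The odd-$n$ case is where the difficulty concentrates: for $m=4$ no nonidentity renaming of the alternatives leaves any preference relation unchanged, so no single profile is forced by neutrality to be symmetric, and the contradiction must come entirely from the telescoped strategyproofness inequalities around the loop rather than from any symmetry-versus-decisiveness clash.
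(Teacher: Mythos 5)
Your high-level strategy is the right one in spirit—single-voter deviations linking profiles with and without Condorcet winners, each yielding one linear inequality in $u(1),\dots,u(m)$, assembled into an infeasible system, then padded with inverse pairs and duplicated voters to reach general $n$ (this last part matches the paper's Case 3 and correctly explains the exceptions $n=6,8$). But the proposal stops exactly where the proof begins: you never exhibit the profiles, the inequalities, or the infeasibility check, and you explicitly defer this as ``the main obstacle.'' That construction is not a routine verification; it is the content of the theorem. More importantly, a single loop cannot work for \emph{all} $u\in\mathcal U$: any fixed finite family of strategyproofness inequalities extracted from one configuration is satisfiable for some utility functions, so the universality over $u$ has to come from somewhere, and your sketch gives no mechanism for it (your proposed case distinction is on the positions of the moved alternatives, not on the shape of $u$). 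The paper's key idea, which is absent from your plan, is an exhaustive two-case split on the utility function itself: either $u(1)-u(2)<u(2)-u(m)$ or $u(1)-u(m-1)>u(m-1)-u(m)$. Each case is killed by a ``star'' rather than a loop: one cyclic profile $R^1$ without a Condorcet winner and three one-voter deviations $R^2,R^3,R^4$, each making a different member of the cycle the Condorcet winner (a $3$-voter construction for the first case, a $5$-voter construction for the second); summing the three resulting inequalities contradicts the case hypothesis. Without this split and the two explicit constructions, nothing in your argument certifies that a suitable configuration exists for every $u$.

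Two smaller points. The neutrality reduction is valid but buys you nothing—the paper's proof uses no symmetry of $f$ whatsoever, and as you yourself observe, neutrality pins down no probabilities at any single profile here. The reduction to $m=4$ by parking the remaining alternatives at the bottom is plausible but not free: at the hub profile $f$ may still place probability on the parked alternatives, so you must check that these terms push the inequalities in the helpful direction (the paper instead keeps general $m$ and places the omitted alternatives in carefully chosen middle positions, which is what makes its two case conditions, stated in terms of $u(m)$ and $u(m-1)$, exhaustive). As it stands, the proposal is a research plan consistent with the paper's method, not a proof.
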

	\begin{proof}
		Assume for contradiction that there is a Condorcet-consistent SDS $f$ for $m\geq 4$ alternatives and $n\geq 5$ voters ($n\neq 6$, $n\neq 8$) that satisfies $u^\Pi$-strategyproofness for some utility function $u\in\mathcal{U}$. The proof works by a case distinction: first, we show that there is no $u^\Pi$-strategyproof SDS that is Condorcet-consistent if $m\geq 4$, $n=3$, and $u(1)-u(2)<u(2)-u(m)$. Next, we show that there is no $u^\Pi$-strategyproof SDS that satisfies Condorcet-consistency if $m\geq 4$, $n=5$, and $u(1)-u(m-1)>u(m-1)-u(m)$. These two cases are exhaustive with respect to the utility functions, i.e., every utility function on at least $4$ alternatives satisfies $u(1)-u(2)<u(2)-u(m)$ or $u(1)-u(m-1)>u(m-1)-u(m)$. Specifically, since $u(2)>u(m-1)$, it holds that $u(1)-u(m-1)>(m-1)-u(m)$ if $u(1)-u(2)\geq u(2)-u(m)$. Finally, since we prove both cases for a fixed number of voters $n$, we generalize the impossibility from a fixed number of voters to larger numbers of voters in the last step.\bigskip
		
		\textbf{Case 1: $u(1)-u(2)<u(2)-u(m)$}
		
		As the first case, we assume that $f$ is defined for $n=3$ voters and satisfies $u^\Pi$-strategyproofness for a utility function $u$ with $u(1)-u(2)<u(2)-u(m)$. Moreover, we let $X=\{x,y,z\}$ denote three alternatives and consider the following preference profiles.\smallskip
		
		\begin{profile}{C{0.07\profilewidth} C{0.29\profilewidth} C{0.29\profilewidth} C{0.29\profilewidth}}
			$R^{1}$: & $1$: $x,y,*,z$ & $2$: $y,z,*,x$ & $3$: $z,x,*,y$\\
			$R^{2}$: & $1$: $y,x,*,z$ & $2$: $y,z,*,x$ & $3$: $z,x,*,y$\\ 		  
			$R^{3}$: & $1$: $x,y,*,z$ & $2$: $z,y,*,x$ & $3$: $z,x,*,y$\\
			$R^{4}$: & $1$: $x,y,*,z$ & $2$: $y,z,*,x$ & $3$: $x,z,*,y$\\	  
		\end{profile}\smallskip

        First, we note that $y$ is the Condorcet winner in $R^2$, $z$ in $R^3$, and $x$ in $R^4$. Consequently, Condorcet-consistency requires that $f(R^2,y)=f(R^3,z)=f(R^4,x)=1$. Moreover, $R^1$ differs from $R^2$ only in the preference relation of the first voter, from $R^3$ in the preference relation of the second voter, and from $R^4$ in the preference relation of the third voter. Hence, we can use $u^\Pi$-strategyproofness to derive constraints on $f(R^1)$. In particular, we derive the following inequality from $u^\Pi$-strategyproofness between $R^1$ and $R^2$. 	
		\begin{align*}
		f(R^1,x)u(1)+f(R^1,y)u(2)+f(R^1,z)u(m)+\sum\limits_{w\in A\setminus X} f(R^1, w)u(w)\geq u(2)
		\end{align*}
		
         Moreover, we derive symmetric conditions from $u^\Pi$-strategyproofness between $R^1$ and $R^3$, and between $R^1$ and $R^4$. By reformulating these inequalities, we deduce that
		\begin{align*}
		f(R^1,x)(u(1) - u(2)) & \geq f(R^1,z)(u(2) - u(m))
		 +\sum_{w\in A\setminus X} f(R^1, w) (u(2) - u(w)),\\
		f(R^1,y)(u(1) - u(2)) & \geq f(R^1,x)(u(2) - u(m))
		 +\sum_{w\in A\setminus X} f(R^1, w) (u(2) - u(w)),\\
		f(R^1,z)(u(1) - u(2)) & \geq f(R^1,y)(u(2) - u(m))
		 +\sum_{w\in A\setminus X} f(R^1, w) (u(2) - u(w)).
		\end{align*}
		
		By summing up these inequalities, we derive the following equation. 
		\begin{align*}
		\sum_{w\in X} f(R^1,w)(u(1) - u(2)) \geq &
		 \sum_{w\in X} f(R^1,w)(u(2) - u(m)) +
		  3\!\!\!\sum_{w\in A\setminus X}\!\!\! f(R^1, w) (u(2) - u(w))
		\end{align*}
		
		Because $u(1) - u(2)<u(2)-u(m)$, it follows that $\sum_{w\in X} f(R^1,w)(u(1) - u(2))< \sum_{w\in X} f(R^1,w)(u(2) - u(m))$ if $\sum_{w\in X} f(R^1,w)>0$. Moreover, it holds that $u(2)>u(w)$ for every $w\in A\setminus X$. Hence, our assumption on $u$ and the above inequality are in conflict. This shows that  no $u^\Pi$-strategyproof SDS can satisfy Condorcet-consistency if $n=3$, $m\geq 4$, and $u(1)-u(2)<u(2)-u(m)$.\medskip
		
		\textbf{Case 2: $u(1)-u(m-1)>u(m-1)-u(m)$}
		
		Next, we assume that $f$ is defined for $n=5$ voters and satisfies $u^\Pi$-strategyproofness for a utility function $u$ with $u(1)-u(m-1)>u(m-1)-u(m)$. In this case, we fix again three alternatives $X=\{x,y,z\}$ and consider the following four preference profiles.\smallskip
		
		\begin{profile}{C{0.07\profilewidth} C{0.29\profilewidth} C{0.29\profilewidth} C{0.29\profilewidth}}
			$R^{1}$: & $1$: $x,*,y,z$ & $2$: $z,*,x,y$ & $3$: $y,*,z,x$ \\
			& $4$: $x,y,z,*$ & $5$: $z,y,x,*$
		\end{profile}
		
		\begin{profile}{C{0.07\profilewidth} C{0.29\profilewidth} C{0.29\profilewidth} C{0.29\profilewidth}}
			$R^{2}$: & $1$: $x,*,z,y$ & $2$: $z,*,x,y$ & $3$: $y,*,z,x$ \\
			& $4$: $x,y,z,*$ & $5$: $z,y,x,*$
		\end{profile}
		
		\begin{profile}{C{0.07\profilewidth} C{0.29\profilewidth} C{0.29\profilewidth} C{0.29\profilewidth}}
			$R^{3}$: & $1$: $x,*,y,z$ & $2$: $z,*,y,x$ & $3$: $y,*,z,x$ \\
			& $4$: $x,y,z,*$ & $5$: $z,y,x,*$
		\end{profile}
		
		\begin{profile}{C{0.07\profilewidth} C{0.29\profilewidth} C{0.29\profilewidth} C{0.29\profilewidth}}
			$R^{4}$: & $1$: $x,*,y,z$ & $2$: $z,*,x,y$ & $3$: $y,*,x,z$ \\
			& $4$: $x,y,z,*$ & $5$: $z,y,x,*$  
		\end{profile}\smallskip

        First, we note that $z$ is the Condorcet winner in $R^2$, $y$ in $R^3$, and $x$ in $R^4$. Hence, Condorcet-consistency entails that $f(R^2,z)=f(R^3,y)=f(R^4,x)=1$. Furthermore, the profile $R^1$ differs from the profiles $R^2$, $R^3$, and $R^4$, in the preference relations of voters $1$, $2$, and $3$, respectively. We thus use $u^\Pi$-strategyproofness to derive conditions on $f(R^1)$. In particular, $u^\Pi$-strategyproofness between $R^1$ and $R^2$ entails the following equation. The left side of this inequality is voter $1$'s expected utility in $R^2$ and the right hand side is his expected utility he could obtain by deviating to $R^1$.
		\begin{align*}
		 u(m\!-\!1)\geq f(R^1,x)u(1)+f(R^1,z)u(m\!-\!1)
		 +f(R^1,y)u(m) +\!\!\sum_{w\in A\setminus X}\!\! f(R^1, w)u(w)\
		\end{align*}
		
		Next, we reformulate the inequality so that our assumption on $u$ can be used. Moreover, we derive symmetric conditions from $R^3$ and $R^4$.
			\begin{align*}
		f(R^1\!,y)(u(m\!-\!1) - u(m))  \geq 
		 f(R^1\!,x)(u(1) - u(m\!-\!1))
		 + \!\!\!\sum_{w\in A\setminus X}\!\! f(R^1\!, w) (u(w) - u(m\!-\!1))\\
		f(R^1\!,x)(u(m\!-\!1) - u(m)) \geq 
		 f(R^1\!,z)(u(1) - u(m\!-\!1))
		 +  \!\!\!\sum_{w\in A\setminus X}\!\! f(R^1\!, w) (u(w) - u(m\!-\!1))\\
		f(R^1\!,z)(u(m\!-\!1) - u(m)) \geq 
	f(R^1\!,y)(u(1) - u(m\!-\!1))
		 + \!\!\! \sum_{w\in A\setminus X}\!\! f(R^1\!, w) (u(w) - u(m\!-\!1))
		\end{align*}
		
		By summing up the last three inequalities, we derive the following equation. 		
		\begin{align*}
		\sum_{w\in X} f(R^1,w) (u(m\!-\!1) - u(m)) \geq 
		& \sum_{w\in X} f(R^1,w)(u(1) - u(m\!-\!1))\\
		&+  3\!\!\!\sum_{w\in A\setminus X} \!\!\!f(R^1, w) (u(w) - u(m\!-\!1))
		\end{align*}
		
		Every alternative $w\in A\setminus X$ is preferred to at least two other alternatives, and thus, $u(w)-u(m-1)>0$. As a consequence, this inequality and our assumption that $u(1)-u(m-1)>u(m-1)-u(m)$ cannot be simultaneously true. Thus, no SDS satisfies both Condorcet-consistency and $u^\Pi$-strategyproofness if $n=5$, $m\geq 4$, and $u(1)-u(m-1)>u(m-1)-u(m)$.\medskip
		
		\textbf{Case 3: Generalizing the impossibility}
		
		Finally, we explain how to extend our base cases a larger number of voters $n$. For odd $n$, this is simple: we can add pairs of voters with inverse preferences to the construction of the required case. These voters do not affect the Condorcet winner as they cancel each other out with respect to the majority margins. Moreover, the remaining analysis only depends on $u^\Pi$-strategyproofness and therefore only on the preferences of specific voters. Hence, no Condorcet-consistent SDS satisfies $u^\Pi$-strategyproofness for any utility function $u\in\mathcal{U}$ if $m\geq 4$, $n\geq 5$, and $n$ is odd. 
		
		For even $n$, we can use Claim (4) of \Cref{prop:properties}. In particular, if we duplicate every voter in the preference profiles used to reason about odd $n$, this claim shows that our analysis stays intact. Moreover, after duplicating, we can again add pairs of dummy voters with inverse preferences without affecting our analysis. Hence, the impossibility also generalizes to even $n\geq 10$. 
	\end{proof}
	
	A close inspection of our proof shows that \Cref{thmc} also holds if $m=3$ unless the considered utility function $u$ is equi-distant, i.e., $u(1)-u(2)=u(2)-u(3)$. This raises the question whether there are $U$-strategyproof and Condorcet-consistent SDSs in this special case. We will next answer this question in the positive: the Condorcet rule ($\cond$), which assigns probability $1$ to the Condorcet winner whenever it exists and returns the uniform lottery over all alternatives otherwise, is $U$-strategyproofness for the set $U=\{u\in\mathcal{U}\colon u(1)-u(2)=u(2)-u(3)\}$ when $m=3$. Moreover, we will show that \cond is effectively the only SDS that satisfies these properties because every Condorcet-consistent and $U$-strategyproof SDS has to agree with \cond for all profiles without majority ties. 
	
	\begin{restatable}{theorem}{thmd}\label{thmd}
    Assume there are $m=3$ alternatives and let $U=\{u\in\mathcal U\colon u(1)-u(2)=u(2)-u(3)\}$. The following claims hold:
    \begin{enumerate}[label=(\arabic*), leftmargin=*]
        \item \cond is $U$-strategyproof. 
        \item If $f$ is $U$-strategyproof and Condorcet-consistent, then $f(R)=\cond(R)$ for all profiles $R$ such that $n_{xy}(R)\neq 0$ for all $x,y\in R$.
    \end{enumerate}
	\end{restatable}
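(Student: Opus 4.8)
The plan is to handle the two claims separately, using throughout that an equi-distant utility function $u\in U$ is, up to a positive affine transformation, the one with $u(\text{top})=1$, $u(\text{middle})=0$, $u(\text{bottom})=-1$, so that the expected utility of a lottery $p$ equals $p(\text{top})-p(\text{bottom})$ and is independent of the probability assigned to the middle alternative. Since with $m=3$ alternatives \cond always returns either a degenerate lottery on the Condorcet winner or the uniform lottery $(\tfrac13,\tfrac13,\tfrac13)$, this shows that $u(\cond(S))\in\{-1,0,1\}$ for every profile $S$, with value $1$ precisely when the voter's top alternative is the Condorcet winner in $S$, value $-1$ precisely when his bottom alternative is, and value $0$ otherwise.

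For Claim~(1), I would fix a voter $i$ with true preference $a\succ_i b\succ_i c$, an equi-distant $u\in U$ normalized as above, and profiles $R,R'$ differing only in $i$'s report. The case $u(\cond(R))=1$ is trivial. If $u(\cond(R))=0$, then $a$ is not the Condorcet winner in $R$, and I would argue it cannot become one in $R'$: since $i$ truthfully ranks $a$ first, his contribution to the margins $n_{ab}$ and $n_{ac}$ is maximal in $R$, so $n_{ab}(R)\ge n_{ab}(R')$ and $n_{ac}(R)\ge n_{ac}(R')$, and thus $a$ being the Condorcet winner in $R'$ would make it the Condorcet winner in $R$ as well. The case $u(\cond(R))=-1$ is dual: $i$ truthfully ranks $c$ last, so his contribution to $n_{ca}$ and $n_{cb}$ is minimal in $R$, which forces $c$ to remain the Condorcet winner in $R'$, hence $u(\cond(R'))=-1$. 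In every case $u(\cond(R))\ge u(\cond(R'))$.

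For Claim~(2), let $f$ be $U$-strategyproof and Condorcet-consistent and let $R$ have no majority ties. If $R$ has a Condorcet winner $w$, then $f(R,w)=1=\cond(R,w)$ by Condorcet-consistency, so the only remaining case (for $m=3$) is that the majority tournament is a $3$-cycle; relabel the alternatives so that $n_{ab}(R),n_{bc}(R),n_{ca}(R)>0$. The goal is $f(R,a)=f(R,b)=f(R,c)=\tfrac13$, which I would establish by exhibiting, for each alternative, a profile in which it is the Condorcet winner reached from $R$ by a joint deviation of a set of voters sharing a single true preference, so that Claim~(4) of \Cref{prop:properties} applies. The key combinatorial observation, obtained by expressing the three margins through the six preference counts, is that if $p_1$ denotes the number of voters reporting $a\succ b\succ c$, then $2p_1-n_{ab}(R)=n_{bc}(R)+2q\ge n_{bc}(R)>0$, where $q$ counts the voters reporting $c\succ b\succ a$; hence $n_{ab}(R)<2p_1$ and in particular $p_1\ge1$. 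Therefore the group $S_1$ of voters reporting $a\succ b\succ c$ can jointly switch to $b\succ a\succ c$, which lowers $n_{ab}$ by $2p_1$ past $0$ while leaving $n_{bc}$ and $n_{ca}$ unchanged, so $b$ is the Condorcet winner of the resulting profile $R'_b$; Condorcet-consistency gives $f(R'_b,b)=1$, and Claim~(4) applied to $S_1$ gives $f(R,a)-f(R,c)=u(f(R))\ge u(f(R'_b))=0$, i.e.\ $f(R,a)\ge f(R,c)$. By the rotational symmetry of the $3$-cycle the same construction (voters reporting $b\succ c\succ a$ switching to $c\succ b\succ a$, and voters reporting $c\succ a\succ b$ switching to $a\succ c\succ b$) yields $f(R,b)\ge f(R,a)$ and $f(R,c)\ge f(R,b)$, so all three probabilities coincide and hence equal $\tfrac13$.

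The step I expect to be the main obstacle is exactly making Claim~(2) work for an arbitrary number of voters. The naive approach that succeeds for three voters --- a single voter swapping his top two alternatives to turn the $3$-cycle into a profile with a Condorcet winner --- breaks down as soon as the relevant majority margin exceeds $2$, and breaks down entirely when $n$ is even. Overcoming this needs the observation that in any $3$-cycle the margin $n_{ab}$ is strictly smaller than twice the number of voters reporting $a\succ b\succ c$, which is precisely what lets a \emph{group} deviation (licensed by Claim~(4) of \Cref{prop:properties}) overshoot the margin; the care here lies in checking that each such group deviation alters only the intended margin and in tracking which affine normalization of the utility function is attached to each group. The rest --- checking that the absence of majority ties forces either a Condorcet winner or a $3$-cycle when $m=3$, and that relabeling the alternatives is harmless because the target conclusion is symmetric --- is routine.
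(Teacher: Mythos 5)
Your proposal is correct and follows essentially the same route as the paper: Claim (1) is a case analysis on where the Condorcet winner lies, using that the uniform lottery gives expected utility $u(2)$ under equi-distant utilities, and Claim (2) is established via group deviations (licensed by Claim (4) of \Cref{prop:properties}) in which same-preference voters swap their top two alternatives to manufacture a Condorcet winner, which under equi-distance forces $f(R,\text{top})\ge f(R,\text{bottom})$ and is then chained cyclically. The only difference is bookkeeping: the paper lets a subset of size $\delta_2=n_2-n_5$ deviate and uses triangle inequalities among the $\delta_i$, whereas you deviate the entire preference-type class and use the identity $2n_1=n_{ab}(R)+n_{bc}(R)+2n_4$ --- both are valid.
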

	\begin{proof}
        Assume there are $m=3$ alternatives and let $U=\{u\in \mathcal U\colon u(1)-u(2)={u(2)-u(3)}\}$. We prove both claims of this theorem separately.\medskip

        \textbf{Claim (1):} We first show that \cond satisfies $U$-strategyproofness if $m=3$. Assume for contradiction that this is not true, i.e., that there are preference profiles $R$ and $R'$, a voter $i\in N$, and a utility function $u\in U$ such that $u$ is consistent with $\succ_i$, ${\succ_j}={\succ_j'}$ for all $j\in N\setminus \{i\}$, and $u(\cond(R'))>u(\cond(R))$. We proceed with a case distinction with respect to the existence of a Condorcet winner. First, assume that there is a Condorcet winner~$x$ in $R$, which means that $\cond(R,x)=1$. If another alternative $y$ is the Condorcet winner in $R'$, voter $i$ prefers $x$ to $y$ because he cannot make $y$ into the Condorcet winner otherwise. Consequently, voter $i$ cannot manipulate in this case as $\cond(R',y)=1$ and $u(x)>u(y)$. Next, assume that there is no Condorcet winner in $R'$. Then, we have that $\cond(R',z)=\frac{1}{3}$ for every alternative $z\in A$ and voter~$i$'s expected utility is $u(2)$ since $u(1)=2u(2)-u(3)$. This is only a manipulation if $x$ is voter $i$'s least preferred alternative in $R$, i.e., if $u(x)=u(3)$. However, voter $i$ cannot change that $x$ is the Condorcet winner if he ranks it last, so no manipulation is possible in this case. Finally, assume that there is no Condorcet winner in $R$. Voter $i$'s expected utility in $R$ is again $u(2)$, which means that he can only manipulate by making his best alternative into the Condorcet winner. Because this is not possible, it follows that $\cond$ is $U$-strategyproof for $U=\{u\in \mathcal U\colon u(1)-u(2)=u(2)-u(3)\}$.\medskip

        \textbf{Claim (2):} Next, we show that every other $U$-strategyproof and Condorcet-consistent SDS $f$ agrees with \cond in all profiles for which there are no majority ties. To this end, we observe that Condorcet-consistency immediately implies that $f(R)=\cond(R)$ for all profiles $R$ with a Condorcet winner. Hence, let $R$ denote a profile without a Condorcet winner and without majority ties. This means there is a majority cycle in $R$, i.e., we can label the alternatives such that $n_{xy}(R)>0$, $n_{yz}(R)>0$, and $n_{zx}(R)>0$. Our goal is to show that $f(R,w)=\frac{1}{3}=\cond(R,w)$ for all alternatives $w\in A$.
		
		First, we will analyze the structure of $R$ in more detail. To this end, we enumerate the six possible preference relations and let 
        \begin{align*}
            &{\succ_1}=x,y,z\qquad{\succ_2}=y,z,x\qquad{\succ_3}=z,x,y\\
            &{\succ_4}=z,y,x\qquad{\succ_5}=x,z,y\qquad{\succ_6}=y,x,z.
        \end{align*}
        Moreover, for every $k\in \{1,\dots, 6\}$, we denote by $n_k$ the number of voters in $R$ that submit $\succ_k$. Finally, we define $\delta_1=n_1-n_4$, $\delta_2=n_2-n_5$, and $\delta_3=n_3-n_6$ and aim to show that $\delta_1>0$, $\delta_2>0$, and $\delta_3>0$. For this, we observe that 
        \begin{align*}
		n_{xy}(R)&=n_1-n_4+n_3-n_6 - (n_2-n_5) = \delta_1-\delta_2+\delta_3\\
		n_{yz}(R)&=n_1-n_4+n_2-n_5-(n_3-n_6) = \delta_1+\delta_2-\delta_3\\
		n_{zx}(R)&=n_2-n_5+n_3-n_6-(n_1-n_4)=-\delta_1+\delta_2+\delta_3.
		\end{align*}
        Now, by summing up the first two inequalities, we derive that $n_{xy}(R)+n_{yz}(R)=2\delta_1$. Since we assume that all three majority margins are positive, this means that $\delta_1>0$ and analogous arguments show also that $\delta_2> 0$ and $\delta_3> 0$. We moreover infer from our equations and the assumptions that $n_{xy}(R)>0$, $n_{yz}(R)>0$, and $n_{zx}(R)>0$ that $\delta_1+\delta_2>\delta_3$, $\delta_2+\delta_3>\delta_1$, and $\delta_3+\delta_1>\delta_2$.

		We now that $f(R,x)=f(R,y)=f(R,z)=\frac{1}{3}$ for all profiles $R$ with $n_{xy}(R)>0$, $n_{yz}(R)>0$, and $n_{zx}(R)>0$. Assume for contradiction that there is such a profile $R$ with $f(R)\neq \cond(R)$, which means that $f(R,x)>f(R,y)$, $f(R,y)>f(R,z)$, or $f(R,z)>f(R,x)$. We suppose that $f(R,x)>f(R,y)$ as the remaining cases are symmetric. To derive a contradiction, let $S$ denote a set of voters that report ${\succ_2}=y,z,x$ in $R$ such that $|S|=\delta_2$. Such a set exists as $n_2\geq \delta_2$. Moreover, let $u\in U$ denote a utility function that is consistent with $\succ_2$. Since $u(y)-u(z)=u(z)-u(x)$ and $f(R,x)>f(R,y)$, it holds that  $u(f(R))<u(2)$. 
        Next, let $R'$ denote the preference profile where all voters in $S$ report $z,y,x$ and note that $n_{zx}(R')=n_{zx}(R)>0$ as no voter changed his preference over $x$ and $z$. Moreover, it holds that $n_{yz}(R')=n_{yz}(R)-2\delta_2=\delta_1-\delta_2-\delta_3<0$ because the voters in $S$ prefer $y$ to $z$ in $R$ but revert this preference in $R'$. Hence, $z$ is the Condorcet winner in $R'$ and $f(R',z)=1$ due to Condorcet-consistency. However, this means that the voters in $S$ can manipulate by deviating from $R$ to $R'$, thus contradicting Claim (4) of \Cref{prop:properties}. This is the desired contradiction, thus showing that $f(R)=\cond(R)$ for all profiles without a Condorcet winner and majority ties. 
	\end{proof}
	
	\begin{remark}\label{rem6}
    If $n$ is odd and $m=3$, \Cref{thmd} characterizes the Condorcet rule as the only Condorcet-consistent SDS that is $U$-strategyproof for the set $U=\{u\in\mathcal{U}\colon u(1)-u(2)=u(2)-u(3)\}$. Moreover, no Condorcet-consistent SDS is $u^{\pi}$-strategyproof for a utility function $u$ outside of this set, so \cond is the only Condorcet-consistent SDS that is $U$-strategyproof for a non-empty and symmetric set $U$ if $n$ is odd and $m=3$. By contrast, if $n$ is even, one can define multiple variants of $\cond$ that also satisfy these properties. For instance, the following SDS $f^3$ is $U$-strategyproof for the given set $U$ and Condorcet-consistent: if two alternatives are top-ranked by exactly half of the voters, $f^3$ assign probability $\frac{1}{2}$ to these alternatives. In all other cases, $f^3$ agrees with \cond. 
	\end{remark}

    \begin{remark}
        The Condorcet rule as well as the tradeoff between Condorcet-consistency and strategyproofness has attracted significant attention. For instance, already \citet{Pott70a} suggested the Condorcet rule as a simple and strategyproof rule for $3$ alternatives and \citet{Gard76a} has shown a first impossibility theorem for Condorcet-consistency and strategyproofness in the context of set-valued voting rules. More recently, after the publication of the conference version of this paper, several papers have investigated the compatibility of strategyproofness and Condorcet-consistency for various other strategyproofness notions and proved far-reaching impossibility theorems \citep[e.g.,][]{BLS22c,BLR23a,BrLe24a}. An interesting follow-up question is whether it is possible to unify all of these results with a general theory on when strategyproofness and Condorcet-consistency are (in)compatible. 
    \end{remark}

	\begin{remark}
		A well-known class of SDSs are tournament solutions which only depend on the majority relation ${\succ_M}=\{(x,y)\in A^2 \colon n_{xy}(R)\geq n_{yx}(R)\}$ of the input profile $R$ to compute the outcome. For these SDSs, unanimity and $u^\Pi$-strategyproofness entail Condorcet-consistency. Thus, there are no unanimous and $u^\Pi$-strategyproof tournament solutions, regardless of the utility function $u$, if $m\geq 4$. This is in harsh contrast to results for set-valued social choice, where attractive tournament solutions satisfy various strategyproofness notions \citep[][]{BBH15a}. 
	\end{remark}

    \subsection{\emph{Ex post} Efficiency}\label{subsec:expost}

    As our last contribution, we will analyze the design of $U$-strategyproof and \ep efficient SDSs. In particular, we will show that, when $U$ contains utility functions that are close to indifferent between the first and second best alternatives, then no $U$-strategyproof and \emph{ex post} efficient SDS can be significantly more decisive than the uniform random dictatorship. 
     To formally state this theorem, we will further generalize $k$-unanimity: we say an SDS $f$ is \emph{$(k,\alpha)$-unanimous} for some $k\in \{0,\dots, n\}$ and $\alpha\in [0,1]$ if $f(R,x)\geq \alpha$ for every profile $R$ and alternative $x$ such that at least $n-k$ voters top-rank $x$ in $R$. Less formally, $(k,\alpha)$-unanimity generalizes $k$-unanimity by only guaranteeing a probability of at least $\alpha$ instead of $1$ to alternatives that are top-ranked by at least $n-k$ voters. Thus, $k$-unanimity is equivalent to $(k,1)$-unanimity. We further observe that the uniform random dictatorship is $(k,\frac{n-k}{n})$-unanimous for every $k\in \{0,\dots, n\}$. 
     
     As we show next, if the gap in the utility between $u(1)$ and $u(2)$ is sufficiently small, no $u^\Pi$-strategyproof and \ep efficient SDS can be significantly more decisive than \rd in the sense of $(k,\alpha)$-unanimity for any $k\in \{1,\dots, n-1\}$. We defer the proof of the following theorem to the appendix as it is lengthy.

    \begin{restatable}{theorem}{exposta}\label{thm:expost1}
            For any $k\in \{1,\dots, n-1\}$, $\epsilon>0$, and utility function $u\in\mathcal{U}$ with $u(1)-(2)\leq\frac{\epsilon}{2}(u(2)-u(3))$, there is no \emph{ex post} efficient and $u^\Pi$-strategyproof SDS that satisfies $(k,\frac{n-k}{n}+\epsilon)$-unanimity if $m\geq 3$ and $n\geq 3$.
    \end{restatable}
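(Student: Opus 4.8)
The plan is to argue by contradiction. Suppose $f$ is \ep efficient, $u^\Pi$-strategyproof and $(k,\frac{n-k}{n}+\epsilon)$-unanimous, and abbreviate $\ell=n-k\in\{1,\dots,n-1\}$. The first step is a reduction to the case $m=3$: if $m\geq 4$, I would restrict attention to the profiles in which every voter ranks $x_4,\dots,x_m$ below $\{x_1,x_2,x_3\}$ in one fixed common order. In each such profile every $x_j$ with $j\geq 4$ is Pareto-dominated by $x_1$, so \ep efficiency assigns it probability $0$, and the map sending a $3$-alternative profile to $f$ applied to its common extension is an \ep efficient, $u'^{\Pi}$-strategyproof, $(k,\frac{n-k}{n}+\epsilon)$-unanimous SDS for three alternatives whose utility function $u'$ (the restriction of $u$) still satisfies $u'(1)-u'(2)\leq\frac{\epsilon}{2}(u'(2)-u'(3))$. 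So from now on $A=\{a,b,c\}$.

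Second, I would extract the single consequence of $u^\Pi$-strategyproofness that the rest of the argument uses, a \emph{near-indifference lemma}: if a voter has true top alternative $\alpha$ and true bottom alternative $\gamma$ in a profile $R$ and deviates to any profile $R'$ (keeping the others fixed), then $f(R,\gamma)-f(R',\gamma)\leq\frac{\epsilon}{2}$; equivalently, no unilateral deviation moves more than $\frac{\epsilon}{2}$ probability onto the deviator's two favorite alternatives. This follows by writing out $u(f(R))\geq u(f(R'))$, merging the first two utility levels (whose gap is at most $\frac{\epsilon}{2}(u(2)-u(3))$) using $\sum_{x\in A}(f(R,x)-f(R',x))=0$, and bounding the third level via $u(2)-u(3)>0$. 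Since Claim~(4) of \Cref{prop:properties} yields exactly the same inequality $u(f(R))\geq u(f(R'))$ for a \emph{block} of voters sharing one preference relation, the lemma also applies to simultaneous deviations of such a block, which is what lets me move many voters at once.

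Third, the core of the proof is a construction that transports the $(k,\frac{n-k}{n}+\epsilon)$-unanimity guarantee from profiles where it applies directly to profiles where it does not. For a first range of $k$ nothing more than the unanimity axiom is needed: if several disjoint groups of $\ell$ voters each top-rank a distinct alternative, the guaranteed probabilities already sum to more than $1$. The hard, near-unanimous range is handled by a template of the form: start from a profile $R$ in which a block of $\ell$ equal-preference voters top-ranks $a$ (so $f(R,a)\geq\frac{\ell}{n}+\epsilon$ directly), with the remaining $k$ voters arranged so that no alternative is Pareto-dominated; apply block deviations of that $\ell$-block to reach companion profiles in which $b$, respectively $c$, is top-ranked by $\ell$ voters; and use the near-indifference lemma along each hop to carry the corresponding guarantee back to $R$ at a cost of at most $\frac{\epsilon}{2}$ per hop, so that $f(R,a)+f(R,b)+f(R,c)$ is forced above $1$. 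Where a single block of size $\ell$ no longer suffices, one first uses $u^\Pi$-strategyproofness together with \ep efficiency (which pins several probabilities to $0$ through Pareto-domination) to bootstrap to an intermediate profile in which some alternative is top-ranked by strictly more than $\ell$ voters while still only $\ell$ are "needed", and runs the transfer argument from there. Finally the impossibility obtained for the electorate size used in the construction is lifted to every $n\geq 3$ by a padding argument that leaves the relevant top-rank counts and the strategyproofness conditions intact.

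The step I expect to be the main obstacle---and the reason the complete proof is long---is precisely the bookkeeping of the third paragraph: one must choose the auxiliary profiles and the order of block deviations so that each hop really costs only $\frac{\epsilon}{2}$, so that the number of hops stays bounded independently of $n$, and so that the final tally of probabilities strictly exceeds $1$ (or strictly violates a unanimity bound) uniformly over \emph{all} admissible $k$, in particular in the delicate near-unanimous regime where the surplus $\epsilon$ over $\frac{n-k}{n}$ is the only available slack. In effect this is the quantitative sharpening of \citet{Beno02a}'s qualitative impossibility, and keeping the constants aligned throughout the chain of profiles is where the real work lies.
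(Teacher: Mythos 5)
Your reduction to three Pareto-optimal alternatives and your ``near-indifference lemma'' (a block deviation can shift at most $\frac{\epsilon}{2}$ probability onto the deviators' top two alternatives, by Claim (4) of \Cref{prop:properties}) are both correct and are in the spirit of what the paper does. But the core of the argument --- your third paragraph --- has a genuine gap, and the specific mechanism you sketch does not work. If a block of $\ell=n-k$ voters with true ranking $a\succ b\succ c$ deviates to a profile $R'$ where $b$ is top-ranked by $\ell$ voters, strategyproofness gives $u(f(R))\geq u(f(R'))\geq u(2)\bigl(\tfrac{\ell}{n}+\epsilon\bigr)+u(3)\bigl(1-\tfrac{\ell}{n}-\epsilon\bigr)$, which only lower-bounds the \emph{combined} mass $f(R,a)+f(R,b)$ (up to $\tfrac{\epsilon}{2}$), not $f(R,b)$ separately; and deviating so that $c$ (the block's bottom alternative) is top-ranked by $\ell$ voters yields no useful bound at all, since the guaranteed mass sits on an alternative worth only $u(3)$ to the deviators. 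So ``carrying the guarantees for $b$ and $c$ back to $R$'' to force $f(R,a)+f(R,b)+f(R,c)>1$ does not follow from your lemma. To convert a top-two bound into a bound on a single alternative, the paper needs two further ingredients you do not have: (i) a finer use of strategyproofness in which the middle alternative enters (in Step 2 this becomes a small linear program with constraints trading $u(1)-u(2)$ against $u(2)-u(3)$), and (ii) Pareto-domination in the \emph{target} profile pinning one of the three alternatives to probability $0$ (e.g.\ $f(R^4,y)=0$ in Step 2), which your ``no alternative is Pareto-dominated'' arrangement explicitly forgoes.

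The second structural gap is the global bookkeeping. The paper does not transport a single guarantee at bounded cost; it proves a contraction lemma (for groups $S,T$ with $S\cup T=N$, $S\cap T\neq\emptyset$) and an expansion lemma (for disjoint $S,T$), each of which \emph{combines two} guarantees $\alpha$ and $\beta$ into a guarantee $\alpha+\beta-1-\tfrac{u(1)-u(2)}{u(2)-u(3)}$ resp.\ $\alpha+\beta-2\tfrac{u(1)-u(2)}{u(2)-u(3)}$ for a group of a new size ($|S\cap T|$ resp.\ $|S\cup T|$). Because both inputs carry a $+\epsilon$ surplus and the loss per step is at most $\epsilon$, the surplus regenerates, and the iteration --- a Euclidean-algorithm-style descent over group sizes $n-jk$, $j(n-k)$, $k_2$, $k_3,\dots$ depending on divisibility of $n$ by $k$ or $n-k$ --- can be run for a number of steps that grows with $n$; your requirement that ``the number of hops stays bounded independently of $n$'' is neither achievable in this scheme nor needed. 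Relatedly, your ``easy case'' (disjoint $\ell$-blocks already giving a contradiction from the axiom alone) covers only the measure-zero situation where $\ell\mid n$ and $n/\ell\leq 3$, and your final padding step is unjustified: the axiom $(k,\tfrac{n-k}{n}+\epsilon)$-unanimity is tied to the electorate size, so an impossibility for one $n$ does not transfer to other $n$ by adding voters --- the paper instead parametrizes the whole construction by $n$ and $k$ from the start. Also note that within the class $\mathcal{D}^{S:x,N\setminus S:y}$ the paper first proves exact invariance of $f$ (its Step 1); with only your $\pm\tfrac{\epsilon}{2}$ lemma such invariance would degrade over $n$ profile changes, which would wreck the accounting.
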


    When $\frac{u(1)-u(2)}{u(2)-u(3)}$ converges to $0$, \Cref{thm:expost1} shows that no $u^\Pi$-strategyproof and \emph{ex post} efficient SDS can satisfy $(k,\frac{n-k}{n}+\epsilon)$-unanimity for any $k\in \{1,\dots, n-1\}$ and $\epsilon>0$. Put differently, this means that, in the limit of $\frac{u(1)-u(2)}{u(2)-u(3)}$, \rd is a maximally decisive $u^\Pi$-strategyproof SDS that satisfies \emph{ex post} efficiency. Also, we observe that \Cref{thm:expost1} extends \Cref{thmb} to the class of \emph{ex post} efficient SDS at the cost of a more demanding bound on the utility function $u$.

    \begin{remark}
        For $1$-unanimity, we can strengthen \Cref{thm:expost1} by closely analyzing its proof:  no \emph{ex post} efficient SDS satisfies $u^\Pi$-strategyproofness and $1$-unanimity if $u(1)-u(2)\leq \frac{1}{n-2}(u(2)-u(3))$. This result strengthens the main result by \citet{Beno02a} for \emph{ex post} efficient SDS as we show that a single canonical utility function is sufficient for the impossibility. We moreover note that in \citeauthor{Beno02a}'s setting, strategyproofness and $1$-unanimity imply a sufficient amount of efficiency for our proof. That is, we believe that our assumption of \emph{ex post} efficiency is only a minor restriction for \Cref{thm:expost1}. 
    \end{remark}

    \begin{remark}
        A well-known result in randomized social choice, Gibbard's random dictatorship theorem \citep{Gibb77a}, states that the only \sd-strategyproof and \ep efficient SDSs are (non-uniform) random dictatorships. Intuitively, these SDSs pick each voter with a fixed probability and implement the chosen voter's favorite alternative as the winner of the election. In a follow-up work, \citet{Sen11a} has shown that this result still holds when requiring $U$-strategyproofness for a set of utility functions $U$ such that $U$ contains only three types of utility functions: voters either have an arbitrarily large gap between the utilities of their first and second best alternatives, their second and third best alternatives, or their third and fourth best alternatives. Based on the ideas in the proof of \Cref{thm:expost1}, one can show that the random dictatorship theorem even holds when only the first two types of utility functions are permitted. Put differently, this means that the random dictatorship theorems holds for all $m\geq 3$ and $n\geq 3$ even if voters only care about their first alternative or they are effectively indifferent between two favorite alternatives but prefer them severely to all other alternatives.
    \end{remark}

    \begin{remark}
        We leave it open whether \Cref{thm:expost1} is tight. In particular, the negative results in \Cref{subsec:kunanimity,subsec:Condorcet} make it challenging to, e.g., design \ep efficient SDSs that are $1$-unanimous and $u^\Pi$-strategyproof for a utility function $u$ with $u(1)-u(2)>\frac{1}{n}(u(2)-u(3))\}$ as these results show that no such rule exist within the most common classes of SDSs. However, we note that the rule $f^2$ suggested in Remark 3 tightly matches the improved bound for $1$-unanimity given in Remark 8 when $m=3$ and $n=4$, thus demonstrating that our analysis is tight at least for this special case.
    \end{remark}    
	
	\section{Conclusion}
	
	We study social decision schemes (SDSs) with respect to a new strategyproofness notion called $U$-strategyproofness. Whereas the common notion of \sd-strategyproofness is derived by quantifying over all utility functions, $U$-strategyproofness is derived by quantifying only over the utility functions in a specified set $U$. This new strategyproofness notion arises from practical observations as often not all utility functions are plausible and has theoretical advantages because it allows for a more fine-grained analysis than \sd-strategyproofness. Based on $U$-strategyproofness, we analyze the compatibility of strategyproofness,  decisiveness (in the sense of avoiding randomization), and basic fairness (or symmetry) concerns. Specifically, we first present two variants of the uniform random dictatorship called $\rd^k$ and $\omni$, which guarantee to select an alternative with probability $1$ if it is top-ranked by all but $k$ voters and satisfy $U$-strategyproofness if the set $U$ only contains utility functions $u$ for which $u(1)-u(2)$ is sufficiently large. Moreover, we show for rank-based SDSs that the large gap between $u(1)$ and $u(2)$ is required to be strategyproof and has to increase in $k$. Secondly, we prove that $U$-strategyproofness is incompatible with Condorcet-consistency if the set $U$ is symmetric and there are $m\geq 4$ alternatives. Finally, we show that no \ep efficient and $U$-strategyproof SDS can be significantly more decisive than the uniform random dictatorship if the set $U$ contains utility functions that are close to indifferent between the two favorite alternatives. 
	
	Our results have an intuitive interpretation: strategyproofness is only compatible with decisiveness if each voter has a clear best alternative. Even more, the more decisiveness is required, the stronger voters have to favor their most-preferred alternative. This conclusion is highlighted by \Cref{thma,thmb} as well as \Cref{thm:expost1}. Moreover, it coincides with the informal argument that it is easier to manipulate for a voter who deems many alternatives acceptable as he can more easily change the outcome to another acceptable outcome. Hence, our results show that the main source of manipulability are voters who are close to indifferent between some alternatives.

	\section*{Acknowledgments}

    This work was supported by the NSF-CSIRO grant on ``Fair Sequential Collective Decision-Making'' (RG230833). Much of this research was carried out during my PhD at TU Munich, during which I was supported by the Deutsche Forschungsgemeinschaft under grants BR 2312/12-1 and BR 2312/11-2.
	I thank the anonymous IJCAI reviewers and Felix Brandt for helpful comments.

	\appendix
	\clearpage

    \section{Omitted Proofs}
	\subsection{Proof of \Cref{prop:properties}}
        
	\properties*
	\begin{proof}
    Fix a non-empty set of utility function $U$ and suppose that $f$ and $g$ are two $U$-strategyproof SDSs. We will prove each of our four claims independently.\medskip

    \textbf{Proof of Claim (1):} Let $\Pi$ denote the set of all permutations on $A$ and let $U'=\{u\circ\pi\colon u\in U, \pi\in \Pi\}$ be the smallest symmetric set that contains $U$. We assume for this claim that $f$ is neutral and show that it is $U'$-strategyproof. Assume for contradiction that $f$ fails this condition, which means that there are two preference profiles $R$ and $R'$, a voter $i$, a utility function $u'\in U'$ such that ${\succ_j}={\succ_j'}$ for all $j\in N\setminus \{i\}$, $u'$ is consistent with $\succ_i$, and $u'(f(R'))>u'(f(R))$. By the definition of $U'$, there is a permutation $\pi$ and utility function $u\in U$ such that $u'(x)=u(\pi(x))$ for all $x\in A$. 
    Moreover, let $\hat R =\pi(R)$ and $\hat R'=\pi(R')$ denote the preference profiles derived from $R$ and $R'$ by permuting the alternatives with respect to $\pi$, i.e., it holds for all voters $j\in N$ and alternatives $x,y\in A$ that $\pi(x)\mathrel{\hat\succ_j} \pi(y)$ (resp. $\pi(x)\mathrel{\hat\succ_j'} \pi(y)$) if and only if $x\succ_j y$ (resp. $x\succ_j' y$). 
    
    We first note that $u$ is consistent with voter $i$'s preference relation $\hat\succ_i$ in $\hat R$ because $u'$ is consistent with $\succ_i$. In more detail, the consistency of $u'$ and $\succ_i$ shows thus $x\succ_i y$ if and only if $u'(x)>u'(y)$. Moreover, the definitions of $u'$ and $\hat{\succ_i}$ require for all $x,y\in A$ that $u(\pi(x))>u(\pi(y))$ if and only if $u'(x)>u'(y)$, and $\pi(x)\mathrel{\hat\succ_i} \pi(y)$ if and only if $x\succ_i y$. Hence, $u(\pi(x))>u(\pi(y))$ if and only if $\pi(x)\mathrel{\hat\succ_i} \pi(y)$ or equivalently, $u(x)>u(y)$ if and only if $x\mathrel{\hat\succ_i} y$. Next, let  $\pi^{-1}$ be the inverse permutation of $\pi$, i.e., $\pi^{-1}(\pi(x))=x$ for all $x\in A$. Since $\pi^{-1}(\hat R)=R$ and $\pi^{-1}(\hat R')=R'$, it follows from the neutrality of $f$ that $f(\hat R,x)=f(\pi^{-1}(\hat R),\pi^{-1}(x))=f(R,\pi^{-1}(x))$ and $f(\hat R',x)=f(\pi^{-1}(\hat R'),\pi^{-1}(x))=f(R',\pi^{-1}(x))$. Based on this insight and the fact that $u'(x)=u(\pi(x))$, we compute that 
		\begin{align*}
		u(f(\hat{R}))&=\sum_{x\in A} f(\hat{R},x) u(x)
		= \sum_{x\in A} f(R,\pi^{-1}(x)) u(x)= \sum_{x\in A} f(R,x) u(\pi(x))\\
		&<\sum_{x\in A} f(R',x) u(\pi(x))
        =\sum_{x\in A} f(R',\pi^{-1}(x)) u(x)
		=\sum_{x\in A} f(\hat{R'},x) u(x)
        =u(f(\hat{R'})).
		\end{align*}
		
		This means that voter $i$ can manipulate by deviating from $\hat R$ to $\hat R'$ with respect to the utility function $u\in U$, which contradicts our assumption that $f$ is $U$-strategyproof.\medskip

    \textbf{Proof of Claim (2):} For proving this claim, we first observe that if $f$ is $U_1$-strategyproof and $U_2$-strategyproof for two sets of utility functions $U_1$ and $U_2$, then it is $U_1\cup U_2$-strategyproof. This means that it suffices to show for each preference relation ${\succ}\in\mathcal{R}$ that $f$ is $\text{conv}(U\cap \mathcal{U}^{\succ})$-strategyproof. To this end, we fix an arbitrary preference relation $\succ$ and consider two utility functions $u_1,u_2\in U\cap \mathcal{U}^{\succ}$. By $U$-strategyproofness, it follows that $u_1(f(R))\geq u_1(f(R'))$ and $u_2(f(R))\geq u_2(f(R'))$ for all preferences profiles $R$ and $R'$ and voters $i$ such that ${\succ_i}={\succ}$ and ${\succ_j}={\succ_j'}$ for all voters $j\in N\setminus \{i\}$. 
    Now, let $\lambda\in[0,1]$ and define $u_\lambda$ by $u_\lambda(x)=\lambda u_1(x)+(1-\lambda)u_2(x)$ for all $x\in A$. We infer that $u_\lambda(f(R))=\lambda u_1(f(R))+(1-\lambda)u_2(f(R)) \geq \lambda u_1(f(R'))+(1-\lambda)u_2(f(R'))= u_\lambda(f(R'))$ for all profiles $R$ and $R'$ that satisfy the previous requirements. This proves that $f$ is $\{u\}$-strategyproof for every utility function $u\in \text{conv}(U\cap \mathcal{U}^{\succ})$, which equivalently means that it is $\text{conv}(U\cap \mathcal{U}^{\succ})$-strategyproof.\medskip

    \textbf{Proof of Claim (3):} As our third claim, we will show that the set of $U$-strategyproof SDSs is convex. To this end, recall that $f$ and $g$ denote two $U$-strategyproof SDSs and define $h$ by $h(R)=\lambda f(R)+(1-\lambda) g(R)$ for all profiles $R$ and some $\lambda\in[0,1]$. Now, it holds for every preference profile $R$, voter $i$, and utility function $u_i\in U$ that is consistent with $\succ_i$ that $u_i(h(R))=\lambda u_i(f(R))+(1-\lambda) u_i(g(R))$. This immediately implies that $h$ is $U$-strategyproof because $u_i(f(R))\geq u_i(f(R'))$ and $u_i(g(R))\geq u_i(g(R'))$ for all profiles $R$ and $R'$, voters $i$, and utility functions $u_i\in U$ such that $u_i$ is consistent with $\succ_i$ and ${\succ_j}={\succ_j'}$ for all $j\in N\setminus \{i\}$.\medskip

    \textbf{Proof of Claim (4):} We lastly will show that $U$-strategyproofness implies a weak form of group-strategyproofness as groups of voters with the same preference relation cannot benefit by jointly deviating if they have utility functions in $U$. To prove this claim, consider two preference profiles $R$, $R'$, a set of voters $S\subseteq N$, and a utility function $u\in U$ such that ${\succ_j}={\succ_j'}$ for all $j\in N\setminus S$, ${\succ_i}={\succ_j}$ for all $i,j\in S$, and $u$ is consistent with $\succ_i$ for all $i\in S$. Now, consider a sequence of preference profiles $R^0,\dots, R^{|S|}$ such that $R^0=R$, $R^{|S|}=R'$, and $R^{k+1}$ differs from $R^k$ for all $k\in \{0, \dots, |S|-1\}$ by replacing the preference relation $\succ_i$ of a voter $i\in S$ with his preference relation in $R'$. Since $u\in U$ is consistent with the preference relation of every voter $i\in S$, we infer from $U$-strategyproofness that $u(f(R^k))\geq u(f(R^{k+1}))$ for all $k\in \{0,\dots, |S|-1\}$. By chaining these inequalities, it follows that $u(f(R))\geq u(f(R'))$, thus proving our claim. 
	\end{proof}

\subsection{Proof of \Cref{thm:expost1}}

\exposta*
\begin{proof}
Let $f$ denote an \ep efficient and $u^\Pi$-strategyproof SDS for some utility function $u$. We will first discuss some general insights on the behavior of $f$ before proving the theorem. To this end, we define by $\mathcal{D}^{S:x,N\setminus S:y}$ the set of preference profiles where all voters in $S$ top-rank $x$, all voters in $N\setminus S$ top-rank $y$, and $x$ and $y$ are the only Pareto-optimal alternatives in $R$. We will analyze the behavior of $f$ on these profiles in several steps. In particular, we first show that, for all alternatives $x,y\in A$ and groups of voters $S$, it holds that $f(R)=f(R')$ for all $R,R'\in \mathcal{D}^{S:x,N\setminus S:y}$. 
In the second step, we prove a contraction statement: given two sets of voters $S$ and $T$ with $S\neq T$, $S\cap T\neq\emptyset$, and $S\cup T=N$, three alternatives $x,y,z$, and two parameters $\alpha, \beta\in [0,1]$, it holds that $f(R,x)\geq \alpha+\beta-1-\frac{u(1)-u(2)}{u(2)-u(3)}$ for all $R\in \mathcal{D}^{S\cap T:x, N\setminus (S\cap T):z}$ if $f(R^1,x)\geq \alpha$ and $f(R^2,y)\geq\beta$ for all profiles $R^1\in\mathcal{D}^{S:x,N\setminus S:y}$ and $R^2\in\mathcal{D}^{T:y,N\setminus T:z}$. Thirdly, we show a dual expansion lemma: for all sets of voters $S$ and $T$ with $S\cap T=\emptyset$, alternatives $x,y,z\in A$, and parameters $\alpha,\beta\in[0,1]$, it holds that $f(R,z)\geq \alpha+\beta-2\frac{u(1)-u(2)}{u(2)-u(3)}$ for all $R\in\mathcal{D}^{S\cup T:z N\setminus (S\cup T): y}$ if $f(R,x^1)\geq \alpha$ and $f(R^2,z)\geq \beta$ for all $R^1\in\mathcal{D}^{S:x,N\setminus S: y}$ and $R^2\in\mathcal{D}^{T:z,N\setminus T: y}$. By combining these three insights, we will lastly prove the theorem.\medskip

\textbf{Step 1:} Fix a s set of voters $S$ and two alternatives $x,y\in A$. We will show that $f(R)=f(R')$ for all profiles $R,R'\in\mathcal{D}^{S:x,N\setminus S:y}$. To this end, we define $R^*$ as the preference profile given by 
    ${\succ_i}^*=x,y,*$ for all $i\in S$ and ${\succ_i}^*=y,x,*$ for all $i\in N\setminus S$. We will show that $f(R)=f(R^*)$ for all $R\in\mathcal{D}^{S:x,N\setminus S:y}$. This proves the lemma as $R$ is chosen arbitrarily, i.e., it follows from our claim that $f(R')=f(R^*)=f(R)$ for all profiles $R$ and $R'$ in $\mathcal{D}^{S:x,N\setminus S:y}$.
    
	To prove that $f(R)=f(R^*)$, we consider a sequence of preferences profiles $R^0,\dots, R^n$ such that $R^0=R$, $R^n=R^*$, and $R^k$ is derived from $R^{k-1}$ by replacing the preference relation of voter $k$ with $\succ_k^*$ for all $k\in \{1,\dots, n\}$. We note that throughout this sequence, we never change the top-ranked alternative of a voter because every voter top-ranks the same alternative in $R$ and $R^*$. This means that $x\succ_i^k y$ if and only if $x\succ_i y$ for all $i\in N$ and profiles $R^k$. Moreover, it holds for every profile $R^k$ in our sequence that only $x$ and $y$ are Pareto-optimal. Hence, \emph{ex post} efficiency shows that $f(R^k,x)+f(R^k,y)=1$ for all profiles $R^k$. 
    We will next show that $f(R^{k-1})=f(R^k)$ for all $k\in \{1,\dots, n\}$. For this, assume that voter~$k$ top-ranks $x$ in $R$ (and thus also in $R^k$); the argument is symmetric if he top-ranks $y$. Since $f(R^{k-1},x)+f(R^{k-1},y)=f(R^k,x)+f(R^k,y)=1$, $u^{\pi}$-strategyproofness from $R^{k-1}$ to $R^k$ implies that
    \[u(1) f(R^{k-1},x)+u(r(\succ_k,y)) f(R^{k-1},y)\geq u(1) f(R^{k},x)+u(r(\succ_k,y))f(R^{k},y).\]
    Since $1<r(\succ_k,y)$ and $f(R^{k-1},x)+f(R^{k-1},y)=f(R^k,x)+f(R^k,y)=1$, this inequality is only true if $f(R^{k-1},x)\geq f(R^k,x)$. 
    On the other hand, we derive from $u^{\pi}$-strategyproofness from $R^k$ to $R^{k-1}$ that 
 \[u(1) f(R^{k},x)+u(2) f(R^{k},y)\geq u(1) f(R^{k-1},x)+u(2) f(R^{k-1},y).\]
    This inequality is only true if $ f(R^{k},x)\geq f(R^{k-1},x)$. This implies that $f(R^{k-1},x)=f(R^k,x)$ and \emph{ex post} efficiency requires in turn that $f(R^{k-1},y)=f(R^k,y)$ as the probabilities need to sum up to $1$. Hence, $f(R^{k-1})=f(R^k)$ and chaining these equalities proves that $f(R)=f(R^*)$.\medskip

\textbf{Step 2:} For this step, we fix two sets of voters $S$ and $T$ such that $S\neq T$, $S\cap T\neq\emptyset$, and $S\cup T=N$, three distinct alternatives $x,y,z\in A$, and two parameters $\alpha,\beta\in [0,1]$. Moreover, we suppose that $f(R,x)\geq \alpha$ for all profiles $R\in\mathcal{D}^{S:x,N\setminus S:y}$ and $f(R,y)\geq \beta$ for all profiles $R\in\mathcal{D}^{T:y,N\setminus T:z}$, and we will show that $f(R,x)\geq \alpha+\beta-1-\frac{u(1)-u(2)}{u(2)-u(3)}$ for all profiles $R\in\mathcal{D}^{S\cap T:x, N\setminus (S\cap T):z}$. To prove this claim, we first note that, due to Step 1, it suffices to prove that $f(R,x)\geq \alpha+\beta-1-\frac{u(1)-u(2)}{u(2)-u(3)}$ for a single profile $R\in\mathcal{D}^{S\cap T:x, N\setminus (S\cap T):z}$. We hence focus on the following four profiles.\smallskip
	
	\begin{profile}{C{0.07\profilewidth} C{0.3\profilewidth} C{0.3\profilewidth} C{0.3\profilewidth}}
    	$R^1$: & $S\setminus T$: $x,y,z,*$ & $S\cap T$: $x,y,z,*$ & $T\setminus S$: $y,z,x,*$ \\
        $R^2$: & $S\setminus T$: $z,x,y,*$ & $S\cap T$: $y,z,x,*$ & $T\setminus S$: $y,z,x,*$ \\
		$R^3$: & $S\setminus T$: $z,x,y,*$ & $S\cap T$: $x,y,z,*$ & $T\setminus S$: $y,z,x,*$ \\
        $R^4$: &  $S\setminus T$: $z,x,y,*$ & $S\cap T$: $x,y,z,*$ & $T\setminus S$: $z,x,y,*$ 
	\end{profile}

    First, we note that $R^1\in \mathcal{D}^{S:x, N\setminus S: y}$ and $R^2\in \mathcal{D}^{T:y,N\setminus T:z}$, so we have that $f(R^1,x)\geq \alpha$ and $f(R^2,y)\geq \beta$ by assumption. Next, we use $u^{\pi}$-strategyproofness to reason about the outcome for $R^3$. To this end, we observe that, in all four profiles, only $x$, $y$, and $z$ are Pareto-optimal, so \emph{ex post} efficiency requires that $f(R^i,w)=0$ for all alternatives $w\in A\setminus \{x,y,z\}$ and $i\in \{1,2,3,4\}$. We will thus ignore all these alternatives in the subsequent computations. Furthermore, let $p=f(R^3)$ to simplify the notation. Since $R^3$ only differs from $R^1$ in the preference relations of the voters in $S\setminus T$, we infer from $u^\Pi$-strategyproofness (and Claim (4) of \Cref{prop:properties}) that 
    \[p(z) u(1)+ p(x)u(2)+ p(y) u(3)\geq f(R^1,z) u(1)+ f(R^1,x) u(2) + f(R^1,y) u(3).\] 
    Furthermore, it holds that $f(R^1,z) u(1)+ f(R^1,x) u(2) + f(R^1,y) u(3)\geq \alpha u(2)+(1-\alpha) u(3)$ since $f(R^1,x)\geq \alpha$. By chaining our inequalities, we thus get that 
    \[p(z) u(1)+ p(x)u(2)+ p(y) u(3)\geq \alpha u(2)+(1-\alpha) u(3).\] 
    
    Finally, since $\alpha=(p(x)+p(y)+p(z))-(1-\alpha)$, we can rearrange the inequality to 
    \begin{align*}
        p(z)(u(1)-u(2))\geq p(y) (u(2)-u(3))
        - (1-\alpha)(u(2)-u(3)).
    \end{align*}

    We next turn to $u^\Pi$-strategyproofness between $R^2$ and $R^3$. Since these profiles only differ in the preferences of the voters in $S\cap T$, $u^\Pi$-strategyproofness requires that 
    \[p(x) u(1)+ p(y)u(2)+ p(z) u(3)\geq f(R^2,x) u(1)+ f(R^2,y) u(2) + f(R^2,z) u(3).\]

    Because $f(R^2,y)\geq \beta$, we can lower-bound the left side with $\beta u(2)+(1-\beta)u(3)$. By applying analogous transformations as for $R^1$, we hence derive that 
    \begin{align*}
    p(x)(u(1)-u(2))&\geq p(z) (u(2)-u(3)) - (1-\beta)(u(2)-u(3)). 
    \end{align*}

    We will next show that these inequalities require that the voters in $T\setminus S$ have a low expected utility in $R^3$. Specifically, we will derive an upper bound on the expected utility $u(p)=p(y)u(1)+p(z)u(2)+p(a)u(x)$ of these voters. For this, we treat $p(x)$, $p(y)$, and $p(z)$ as variables and aim to maximize $u(p)$ subject to our our previous constraints. Moreover, we include the lottery constraint in our linear program, capturing that the total probability needs to sum up to $1$. Note that we could also add the constraints that $p(w)\geq 0$ for $w\in \{x,y,z\}$ to our linear program, but these constraints will not be necessary for our analysis. \medskip
	
	\begin{tabular}{llr}
		$\max$ & $p(y) u(1) + p(z) u(2) + p(x) u(3)$ \\
		subject to & $p(x)+p(y)+p(z)=1$ & $\quad$(Lottery)\\
		& $p(z)(u(1) - u(2)) \geq p(y)(u(2) - u(3)) - (1-\alpha)(u(2)-u(3))$ & (SP1) \\
		& $p(x)(u(1) - u(2)) \geq p(z)(u(2) - u(3))-(1-\beta)(u(2)-u(3))$& (SP2)\medskip
	\end{tabular}

    We first note that the constraints SP1 and SP2 are tight in an optimal solution of the linear program. If this was not the case, we could move probability from $z$ to $y$ or from $x$ to $z$, which increases our objective value. Hence, we can treat our linear program as an equation system. Specifically, when letting $v=\frac{(1-\alpha)(u(2)-u(3))}{u(1)-u(2)}$, $w=\frac{(1-\beta)(u(2)-u(3))}{u(1)-u(2)}$, and $t=\frac{u(2)-u(3)}{u(1)-u(2)}$, we need to solve the system given by 
    \begin{align*}
        p(x)+p(y)+p(z)=1,\qquad 
        p(z)=tp(y)-v,\qquad p(x)=tp(c)-w.
    \end{align*}
    
    It can be verified that these equations are satisfied if and only if 
    \begin{align*}
    &p(x)=\frac{-v t - w - w t + t^2}{1+t+t^2 },\qquad p(y)=\frac{1+v+w+vt}{1+t+t^2},
        \qquad p(z)=\frac{wt-v+t}{1+t+t^2}.
    \end{align*}

    We hence derive that the expected utility of the voters in $T\setminus S$ is at most
    \begin{align*}
        u(p)&\leq u(1)\frac{1+v+w+vt}{1+t+t^2} + u(2)\frac{wt-v+t}{1+t+t^2} + u(3)\frac{-v t - w - w t + t^2}{1+t+t^2}\\
        &=(u(1)-u(2))\frac{1+v+w+vt}{1+t+t^2} + (u(2)-u(3))\frac{1+w+vt+wt+t}{1+t+t^2}+u(3)\\
        &=(u(1)-u(2))\frac{1+v+w}{1+t+t^2}+(u(1)-u(2))\frac{vt}{1+t+t^2} \\
        &\qquad+ (u(2)-u(3))\frac{1+w}{1+t+t^2} + t(u(2)-u(3))\frac{1+v+w}{1+t+t^2}+u(3)\\
        &=(u(1)-u(2))\frac{1+v+w}{1+t+t^2}+t(u(1)-u(2))\frac{1+v+w}{1+t+t^2} \\
        &\qquad + t^2(u(1)-u(2))\frac{1+v+w}{1+t+t^2}+u(3)\\
        &=(u(1)-u(2))(1+v+w)+u(3).
    \end{align*}

    Here, the first two equations are basic transformations. For the third line, we use that $u(2)-u(3)=t(u(1)-u(2))$ by the definition of $t$ and we cancel the term $1+t+t^2$ in the last line.

    Finally, the voters in $T\setminus S$ can deviate from $R^3$ to $R^4$. By $u^\Pi$-strategyproofness, we thus infer that 
    \[p(y) u(1)+p(z) u(2) + p(x)u(3)\geq f(R^4,y) u(1)+f(R^4,z) u(2) + f(R^4,x)u(3).\]
    
    Based on our upper bound on $u(p)$, it follows that $(u(1)-u(2))(1+v+w)+u(3)\geq f(R^4,y) u(1)+f(R^4,z) u(2) + f(R^4,x)u(3)$. Moreover, it holds that $f(R^4,y)=0$ by \ep efficiency. By combining these insights and subtracting $u(3)$ from both sides, we derive that 
    \begin{align*}
        (u(1)-u(2))(1+u+v)\geq f(R^4,z) (u(2)-u(3)).
    \end{align*}

    Since $f(R^4,x)+f(R^4,z)=1$, this means that $f(R^4,x)\geq 1-\frac{u(1)-u(2)}{u(2)-u(3)}(1+u+v)$. Finally, by substituting the definitions of $u$ and $v$, it follows that 
    \begin{align*}
        f(R^4,x)&\geq 1-\frac{u(1)-u(2)}{u(2)-u(3)}(1+(1-\alpha)\frac{u(2)-u(3)}{u(1)-u(2)}+(1-\beta)\frac{u(2)-u(3)}{u(1)-u(2)})\\
       &=\alpha+\beta-1-\frac{u(1)-u(2)}{u(2)-u(3)}.
    \end{align*}
    This completes the proof of this step.\medskip

    \textbf{Step 3:} Dual to the last step, we will next prove an expansion lemma for two disjoint sets of voters $S$ and $T$. To this end, let $S$ and $T$ denote two groups of voters with $S\cap T=\emptyset$, fix three distinct alternatives $x,y,z\in A$, and let $\alpha,\beta\in [0,1]$. We assume that $f(R,x)\geq \alpha$ for all $R\in\mathcal{D}^{S:x,N\setminus S:y}$ and $f(R,z)\geq \beta$ for all $R\in \mathcal{D}^{T:z, N\setminus T:y}$ and prove that $f(R,z)\geq \alpha+\beta-2\frac{u(1)-u(2)}{u(2)-u(3)}$ for all profiles $R\in\mathcal{D}^{S\cup T:z, N\setminus (S\cup T): y}$. Just as for the last step, it suffices to prove this claim for a single profile $R\in \mathcal{D}^{S\cup T:z, N\setminus (S\cup T): y}$ due to Step 1. Hence, we consider the following four profiles.\smallskip

    \begin{profile}{C{0.08\profilewidth} C{0.3\profilewidth} C{0.3\profilewidth} C{0.3\profilewidth}}
		$R^1$: & $S$: $x,y,z,*$ & $N\setminus (S\cup T)$: $y,z,x,*$ & $T$: $y,z,x,*$\\
		$R^2$: & $S$: $y,x,z,*$ & $N\setminus (S\cup T)$: $y,z,x,*$ & $T$: $z,y,x,*$\\
		$R^3$: & $S$: $x,y,z,*$ & $N\setminus (S\cup T)$: $y,z,x,*$ & $T$: $z,y,x,*$\\
        $R^4$: & $S$: $z,x,y,*$ & $N\setminus (S\cup T)$: $y,z,x,*$ & $T$: $z,y,x,*$	
	\end{profile}

    First, we observe that $R^1\in\mathcal{D}^{S:x,N\setminus S:y}$ and $R^2\in \mathcal{D}^{T:z, N\setminus T:y}$, so $f(R^1,x)\geq \alpha$ and $f(R^2,z)\geq \beta$ by our assumptions. Next, we will analyze the outcome for $R^3$, denoted by $p=f(R^3)$, based on $u^\Pi$-strategyproofness and \ep efficiency. To this end, we first note that all alternatives $w\in A\setminus \{x,y,z\}$ are Pareto-dominated and thus $f(R^3,w)=0$ by \ep efficiency. We will thus ignore these alternatives from now on. Next, we note that the voters in $T$ can deviate from $R^1$ to $R^3$ by reporting $z,y,x,*$ instead of $y,z,x,*$. Hence, $u^\Pi$-strategyproofness requires that 
\begin{align*}
	 f(R^1,y) u(1) +  f(R^1,z)u(2) +  f(R^1,x)u(3) \geq  p(y)u(1) +  p(z)u(2) +  p(x)u(3). 
\end{align*} 

Since $f(R^1,x)\geq \alpha$, it holds that $ f(R^1,y)u(1)+ f(R^1,z)u(2)+ f(R^1,x)u(3)\leq (1-\alpha)u(1)+\alpha u(3)$. Together with our previous inequality, this implies that 
\[(1-\alpha)u(1)+\alpha u(3)\geq  p(y)u(1) +  p(z)u(2) +  p(x)u(3). \]

As the next step, we again use that $\alpha=(p(x)+p(y)+p(z))-(1-\alpha)$. In particular, by substituting $\alpha$ in our last inequality and rearranging the terms, we derive that 
\[(1-\alpha) (u(1)-u(3))\geq (u(1)-u(3))p(y) + (u(2)-u(3))p(z).\]

Lastly, we substitute $p(y)=1-p(x)-p(z)$ and solve for $\alpha$ to derive that 
\[p(x)+\frac{u(1)-u(2)}{u(1)-u(3)}p(z) = p(x)+\frac{u(1)-u(3)-(u(2)-u(3))}{u(1)-u(3)}p(z)\geq \alpha.\]

Next, we turn to the relation between $R^2$ and $R^3$. Since the voters in $S$ can deviate from $R^2$ to $R^3$ by reporting $x,y,z,*$ instead of $y,x,z,*$, we infer from $u^\Pi$-strategyproofness that
\[ f(R^2,y)u(1) +  f(R^2,x)u(2)+ f(R^2,z)u(3)\geq p(y)u(1)+p(x) u(2)+p(z) u(3).\]

Moreover, because $f(R^2,z)\geq \beta$, we can upper bound our right hand side by $f(R^2,y)u(1) +  f(R^2,x)u(2)+ f(R^2,z)u(3)\leq (1-\beta) u(1)+ \beta u(3)$. By applying analogous transformation as for $R^1$, we thus derive that 
\[(1-\beta) (u(1)-u(3))\geq (u(1)-u(3))p(y) + (u(2)-u(3))p(x).\]

Lastly, we substitute again $p(y)$ with $1-p(x)-p(z)$ and solve for $\beta$ to infer that 
\[p(z)+\frac{u(1)-u(2)}{u(1)-u(3)} p(x)=p(z)+\frac{u(1)-u(3)-(u(2)-u(3)}{u(1)-u(3)} p(x)\geq \beta.\]

By summing up our final two inequalities for $R^1$ and $R^3$, it follows that 
\begin{align*}
	& \frac{u(1)-u(3)+u(1)-u(2)}{u(1)-u(3)}\Big(p(x) + p(z)\Big) \geq \alpha + \beta
\end{align*}

Since $\frac{u(1)-u(3)}{u(1)-u(3)+u(1)-u(2)}=1-
\frac{u(1)-u(2)}{u(1)-u(3)+u(1)-u(2)}\geq 1-
\frac{u(1)-u(2)}{u(2)-u(3)}$, we derive that 
\[p(x) + p(z) \geq \alpha+\beta - (\alpha+\beta)\frac{u(1)-u(2)}{u(2)-u(3)}.\]

Now, if $\alpha+\beta\leq 1$, this means that $p(x) + p(z) \geq
\alpha+\beta-\frac{u(1)-u(2)}{u(2)-u(3)}$. 

On the other hand, if $\alpha+\beta>1$, we note that our previous inequalities imply that 
\begin{align*}
    & 1-\alpha \geq  p(y) + \frac{u(2)-u(3)}{u(1)-u(3)} p(z)\geq \frac{u(2)-u(3)}{u(1)-u(3)}(p(y)+p(z))=\frac{u(2)-u(3)}{u(1)-u(3)}(1-p(x))\\
	 & 1-\beta \geq  p(y) + \frac{u(2)-u(3)}{u(1)-u(3)} p(x)\geq \frac{u(2)-u(3)}{u(1)-u(3)}(p(y)+p(x))=\frac{u(2)-u(3)}{u(1)-u(3)}(1-p(z)).
\end{align*}

Using that $\frac{u(1)-u(3)}{u(2)-u(3)}=1+\frac{u(1)-u(2)}{u(2)-u(3)}$, we can rearrange these two inequalities to 
\begin{align*}
    p(x)\geq \alpha-(1-\alpha)\frac{u(1)-u(2)}{u(2)-u(3)} \qquad \text{and }\qquad p(z)\geq \beta-(1-\beta)\frac{u(1)-u(2)}{u(2)-u(3)}.
\end{align*}

By summing up these two inequalities and using that $(1-\alpha)+(1-\beta)\leq 1$ since $\alpha+\beta>1$, it follows again that $p(x)+p(z)\geq \alpha+\beta-\frac{u(1)-u(2)}{u(2)-u(3)}$.

Finally, we will turn to the profile $R^4$. We note for this that $f(R^4,w)=0$ for all $w\in A\setminus \{y,z\}$ due to \ep efficiency. Furthermore $u^\Pi$-strategyproofness from $R^4$ to $R^3$ implies that 
\[f(R^4,z) u(1) + f(R^4,y) u(3)\geq p(z) u(1) + p(x) u(2) + p(y) u(3).\]

By subtracting $u(3)$ from both sides and dividing by $u(1)-u(3)$, this means that 
\[f(R^4,z) \geq p(z)  + p(x) \frac{u(2)-u(3)}{u(1)-u(3)}=p(z)  + p(x)-p(x)\frac{u(1)-(2)}{u(1)-u(3)}.\]

Our lower bound on $p(z)+p(x)$ and the observation that $p(x)\frac{u(1)-u(2)}{u(1)-u(3)}\leq \frac{u(1)-u(2)}{u(2)-u(3)}$ now imply that $f(R^4,z)\geq \alpha+\beta-2\frac{u(1)-u(2)}{u(2)-u(3)}$, thus proving this step.\medskip

\textbf{Step 4:} Finally, we will prove the theorem. To this end, we assume for contradiction that $f$ satisfies \emph{ex post} efficiency, $(k, \frac{n-k}{n}+\epsilon)$-unanimity for some $k\in \{1,\dots, n-1\}$ and $\epsilon>0$, and $u^\Pi$-strategyproofness for some utility function $u$ with $u(1)-u(2)\leq \frac{\epsilon}{2}(u(2)-u(3))$. We proceed with a case distinction regarding the divisibility of $n$. 

First, assume that $k$ divides $n$, which necessitates that $k\leq \frac{n}{2}$. In this case, we first consider two arbitrary sets of voters $S$ and $T$ such that $|S|=|T|=n-k$ and $|S\cap T|=n-2k$ and three arbitrary alternatives $x,y,z\in A$. By $(k,\frac{n-k}{n}+\epsilon)$-unanimity, it holds that $f(R,x)\geq \frac{n-k}{n}+\epsilon$ for all $R\in\mathcal{D}^{S:x,N\setminus S: y}$ and $f(R,y)\geq \frac{n-k}{n}+\epsilon$ for all $R\in\mathcal{D}^{T:y, N\setminus T:z}$. In turn, Step 2 implies that 
\[f(R,x)\geq \frac{2n-2k}{n}+2\epsilon-1-\frac{u(1)-u(2)}{u(2)-u(3)}\geq \frac{n-2k}{n}+\epsilon\]
for all $R\in\mathcal{D}^{S\cap T:x, N\setminus (S\cap T):z}$. For the second inequality, we use here that $\epsilon\geq \frac{\epsilon}{2}\geq\frac{u(1)-u(2)}{u(2)-u(3)}$ by assumption. Next, we note that this inequality holds for all groups $S$ and $T$ and alternatives $x,y,z$, so we have that $f(R,x)\geq \frac{2n-2k}{n}+2\epsilon-1-\frac{u(1)-u(2)}{u(2)-u(3)}\geq\frac{n-2k}{n}+\epsilon$ for all profiles $R\in\mathcal{D}^{S:x, N\setminus S: y}$ with $|S|=n-2k$ and all alternatives $x,y\in A$. Hence, we can now repeat our argument by applying Step 2 to two groups $S$ and $T$ with $|S|=n-2k$, $|T|=n-k$, and $|S\cap T|=n-3k$ and three arbitrary alternatives $x,y,z$ to infer that
\[f(R,x)\geq \frac{2n-3k}{n}+2\epsilon-1-\frac{u(1)-u(2)}{u(2)-u(3)}\geq \frac{n-3k}{n}+\epsilon\]
for all profiles $R\in\mathcal{D}^{S\cap T:x, N\setminus (S\cap T): z}$. More generally, by repeating this argument, it follows that $f(R,x)\geq \frac{n-jk}{n}+\epsilon$ for all profiles $R\in \mathcal{D}^{S:x,N\setminus S: y}$ with $|S|=n-jk$ and arbitrary alternatives $x,y\in A$. Finally, since we assumed that $k$ divides $n$, we derive a contradiction for the profiles $R\in\mathcal{D}^{S:x, N\setminus S:y}$ when $|S|=k$. In more detail, $(k, \frac{n-k}{n}+\epsilon)$-unanimity implies for these profiles that $f(R,y)\geq \frac{n-k}{n}+\epsilon$ and our previous argument shows that $f(R,x)\geq \frac{k}{n}+\epsilon=\frac{n-(n/k-1)k}{n}+\epsilon$. However, this means that $f(R,x) +f(R,y)=1+2\epsilon>1$, contradicting the definition of an SDS. 

As the second case, we suppose that $n-k$ divides $n$, which means that $n-k\leq \frac{n}{2}$. Now, fix two sets of voters $S$ and $T$ such that $|S|=|T|=n-k$ and $S\cap T=\emptyset$, and three distinct alternatives $x,y,z\in A$. By $(k, \frac{n-k}{n}+\epsilon)$-unanimity, it follows that $f(R,x)\geq \frac{n-k}{n}+\epsilon$ for all $R\in\mathcal{D}^{S:x,N\setminus S: y}$ and $f(R,z)\geq \frac{n-k}{n}+\epsilon$ for all $R\in\mathcal{D}^{S:z,N\setminus S: y}$. By Step 3, it hence follows that 
\[f(R,z)\geq \frac{2(n-k)}{n}+2\epsilon-2\frac{u(1)-u(2)}{u(2)-u(3)}\geq \frac{2(n-k)}{n}+\epsilon\]
for all profiles $R\in \mathcal{D}^{S\cup T:z, N\setminus(S\cup T):y}$. The second inequality here follows again as $\frac{\epsilon}{2}\geq \frac{u(1)-u(2)}{u(2)-u(3)}$. Moreover, just in the last case, this means that $f(R,x)\geq \frac{2n-2k}{n}+\epsilon$ for all groups of voters $S$ with $|S|=2(n-k)$, alternatives $x,y\in A$, and profiles $R\in\mathcal{D}^{S:x,N\setminus S: y}$. Hence, we can repeatedly apply Step 3 to infer that $f(R,x)\geq \frac{j(n-k)}{n}+\epsilon$ for all $j\in \{1,\dots, \frac{n}{n-k}-1\}$, groups of voters $S$ with $S=j(n-k)$, alternatives $x,y\in A$, and profiles $R\in\mathcal{D}^{S:x, N\setminus S: y}$. However, this gives a contradiction when $|S|=(\frac{n}{n-k}-1)(n-k)=k$. Specifically, $(k,\frac{n-k}{n}+\epsilon)$-unanimity implies for each profile $R\in \mathcal{D}^{S:x, N\setminus S:y}$ that $f(R,y)\geq \frac{n-k}{n}+\epsilon$ and our previous reasoning shows that $f(R,x)\geq (\frac{n}{n-k}-1)\frac{n-k}{n}+\epsilon=\frac{k}{n}+\epsilon$. However, both inequalities cannot be simultaneously be true as otherwise $f(R,x)+f(R,y)>1$. This is the desired contradiction for this case.

Finally, suppose that neither $n-k$ nor $k$ divides $n$. We will additionally assume that $n-k<\frac{n}{2}$; the case that $n-k>\frac{n}{2}$ follows analogously by starting the argument by using Step 2 instead of Step 3. Now, we will prove this case by repeatedly applying Steps 2 and 3 to derive a contradiction. In more detail, first note that $(k, \frac{n-k}{n}+\epsilon)$-unanimity implies that $f(R,x)\geq \frac{n-k}{n}+\epsilon$ for all groups of voters $S$ with $|S|=n-k$, alternatives $x,y\in A$, and profiles $R\in\mathcal{D}^{S:x, N\setminus S: y}$. Since $n-k<\frac{n}{2}$ by assumption, we can apply Step 3 and our previous reasoning to infer that $f(R,x)\geq \frac{j(n-k)}{n}+\epsilon$ for all $j\in \{1,\dots, \lfloor\frac{n}{n-k}\rfloor\}$, sets of voters $S$ with $|S|=j(n-k)$, alternatives $x,y\in A$, and profiles $R\in\mathcal{D}^{S:x,N\setminus S: y}$. Now, let $k_2=n-(n-k)\lfloor\frac{n}{n-k}\rfloor$ and note that $k_2<n-k$. Our previous argument shows that $f(R,x)\geq \frac{n-k_2}{n}+\epsilon$ for all profiles $\in\mathcal{D}^{S:x,N\setminus S:y}$ with $|S|=n-k_2$. Now, if $k_2$ divides $S$, we can infer a contradiction as shown in the first case. Otherwise, we can repeatedly apply Step 2 to infer that $f(R,x)\geq \frac{n-jk_2}{n}+\epsilon$ for all $j\in \{1,\dots, \lfloor\frac{n}{k_2}\rfloor\}$, groups of voters $S$ with $|S|=n-jk_2$, alternatives $x,y\in A$, and profiles $R\in\mathcal{D}^{S:x,N\setminus S: y}$. 
In particular, it holds for $k_3=k_2\lfloor\frac{n}{k_2}\rfloor$ that $f(R,x)\geq \frac{n-k_3}{n}+\epsilon$ for all sets of voters $S$, alternatives $x,y\in A$, and profiles $R\in\mathcal{D}^{S:x,N\setminus S: y}$. Moreover, we note that $n-k_3<k_2<n-k$, i.e., by repeatedly applying Steps 2 and 3, we always reduce $k_i$ or $n-k_i$. Hence, we will eventually arrive at a value $k_i\in \{1,\dots, n-1\}$ such that $k_i$ or $n-k_i$ divides $n$ and $f(R,x)\geq \frac{n-k_i}{n}+\epsilon$ for all sets of voters $S$ with $|S|=n-k_i$, alternatives $x,y\in A$, and profiles $R\in\mathcal{D}^{S:x, N\setminus S:y}$. At this point, we can apply our base cases to obtain a contradiction, thus proving the theorem.
\end{proof}

\end{document}